\documentclass[10pt,a4paper]{article}
\usepackage[margin=23mm]{geometry}
\usepackage[T1]{fontenc}
\usepackage{lmodern,amsmath,amssymb,amsthm,mathtools,booktabs,microtype,needspace}
\usepackage{xpatch}
\xpatchcmd{\proof}{\itshape}{\normalfont}{}{\PackageError{mn-dna}{Cannot set upright proof headings}{Check the amsthm proof definition.}}
\usepackage[hidelinks]{hyperref}
\usepackage[nameinlink,capitalise,noabbrev]{cleveref}
\theoremstyle{definition}
\newtheorem{theorem}{Theorem}
\newtheorem{lemma}{Lemma}
\newtheorem{proposition}{Proposition}

\theoremstyle{definition}
\newtheorem{definition}{Definition}
\theoremstyle{definition}

\newcommand{\F}{\mathbb F_2}
\newcommand{\E}{\mathbb E}
\newcommand{\Prob}{\mathbb P}
\newcommand{\C}{\mathcal C}
\newcommand{\T}{\mathcal T}
\newcommand{\ones}{\boldsymbol 1}

\DeclareMathOperator{\wt}{wt}
\DeclareMathOperator{\Var}{Var}
\title{MacKay--Neal Codes for DNA Channels\\under Block-MAP Decoding}
\author{Kenta Kasai\\\small Institute of Science Tokyo}
\date{October 28, 2026}
\hypersetup{pdftitle={MacKay\textendash{}Neal Codes for DNA Channels under Block-MAP Decoding},pdfauthor={Kenta Kasai}}
\begin{document}
\maketitle
\begin{abstract}
We prove achievability below the symmetric information rate (SIR) under block maximum a posteriori (MAP) decoding for a fixed-degree MacKay--Neal (MN) socket ensemble on the quaternary insertion--deletion--substitution channel used to model DNA reads. For every integer $\ell\ge4$, the $(\ell,3,3)$ socket ensemble, with one transmitted bit fixed and an independent uniform known offset, has limiting rate $r_\star=6/\ell$ bits per stored nucleotide. If the joint SIR $J_t$ of a fixed number $t$ of reads under independent, uniformly distributed input nucleotides exceeds $r_\star$, the ensemble-average full extended-word block error is $O(1/n)$, and deterministic graph--offset sequences attain the same limiting rate and message-average error bound. This result concerns the discrete rate family $6/\ell$, not arbitrary rates approaching a given SIR. The channel permits arbitrarily many insertions and unbounded drift, with no alignment supplied to the receiver. The proof combines established MN parity and distance estimates with a strict arbitrary-candidate-list bound and a list construction that accounts for hidden output boundaries.
\end{abstract}

\section{Introduction}
Insertions and deletions obscure the correspondence between a stored DNA sequence and its reads. A standard idealized insertion--deletion--substitution (IDS) model processes each input nucleotide by repeated random insertions, followed by either deletion or a possibly substituted transmission. The receiver observes the resulting sequence without knowing which output symbols correspond to each input nucleotide. Several conditionally independent reads of the same stored sequence can be decoded jointly~\cite[Sec.~II-A and Sec.~IV]{Maarouf2023}. Its event probabilities are independent of the input symbols and surrounding sequence, and inserted symbols are uniform; these assumptions specify the idealization used here.

Practical constructions combine synchronization recovery with error correction. Watermark codes provide soft synchronization information to an outer nonbinary low-density parity-check (LDPC) decoder~\cite{DaveyMacKay2001}; sparse-graph synchronization decoders have also been studied by density evolution~\cite{Goto2018}. For multiple quaternary reads, inner convolutional or time-varying block codes can be concatenated with outer LDPC or polar codes~\cite[Secs.~II--VII]{Maarouf2023}. A recent half-marker construction uses one bit of selected quaternary symbols for synchronization while retaining the other for data. Its information-rate and finite-length experiments assess this allocation and iterative decoding with an outer LDPC code~\cite{HaghighatDuman2025}.

Recent work also places error-correcting constraints across many short strands. Cross-strand polar coding with decision-feedback shifts is evaluated for deletion-only and insertion-only channels in~\cite[Secs.~IV--V]{Lin2025}. A subsequent trellis construction handles simultaneous IDS errors by alternating posterior estimation along binary strands with polar decoding across strands~\cite[Secs.~IV--V]{Wu2026}; its near-capacity evidence is numerical and uses a conjectured capacity benchmark. These cross-strand studies assume one read from each strand, with the correspondence between reads and strands known; the present theorem instead lets the length of an identified strand grow and jointly uses a fixed number of its reads.

Learned polar decoders provide another direction: channel-sample training is combined with successive-cancellation or list decoding and evaluated on synthetic IDS channels, multiple reads, and nanopore data~\cite{AharoniPfister2025}. At the storage-system level, a recent end-to-end scheme combines inner, index, and outer codes with clustering and soft-information processing for sampled nanopore reads~\cite{Welter2026}. Its symbolwise posterior calculations are components of a practical receiver, whereas our block maximum a posteriori (MAP) decoding criterion concerns recovery of the entire extended codeword. Pool sampling, strand identification, and sequence-dependent noise broaden the engineering model in that work; they are not assumptions of our identified-strand theorem.

Provable information-rate-achieving constructions also precede this work. On the binary deletion channel, polar codes with known separating sequences, called guard bands, achieve the information rate of a regular hidden-Markov input process; optimizing such processes yields capacity achievability~\cite{Tal2022}. A subsequent construction strengthens the error-decay guarantee with polynomial-time decoding~\cite{AravaTal2025}. For a related binary IDS model, modified guard bands likewise permit polar codes to achieve the specified input process's information rate with polynomial encoding and decoding complexity~\cite[Thm.~1 and Sec.~III]{PfisterTal2021}. Our result concerns a different code structure and decoding guarantee: a prescribed fixed-degree MacKay--Neal (MN) ensemble on quaternary unaligned reads under matched block-MAP decoding.

Random-coding theorems for synchronization channels have been known since~\cite{Dobrushin1967}. For independent and identically distributed (iid) uniform input nucleotides, the symmetric information rate (SIR) is the limiting mutual information per stored nucleotide between the input and all observed reads. In our model, independently and uniformly chosen codewords already yield vanishing average block error at every fixed rate strictly below the joint SIR. Indeed, the candidate lists of \cref{lem:channel-list} contain the true input with probability tending to one, and an independent competitor belongs to a list with probability equal to its size divided by $4^n$. A union bound then proves random-code achievability. The additional question is whether this guarantee survives common sparse parity constraints, under which different codewords are no longer independently chosen.

We prove that the $(\ell,3,3)$ MN socket ensemble, with one bit fixed and a known uniform offset, has vanishing full extended-word block-MAP error whenever its limiting rate $6/\ell$ lies below the joint iid-uniform SIR. The degrees remain fixed as the blocklength grows. The contribution is this reliability theorem for the specified ensemble and discrete rate family; it does not improve the achievable rates of unconstrained random coding or establish an efficient MAP implementation.

The SIR is determined by the channel law, the number of reads, and the iid-uniform input law. Evaluating it is a separate task from proving reliability of this MN ensemble. For general IDS parameters, the information-rate analysis in~\cite[Sec.~VI-B]{Maarouf2023} uses numerical evaluation through channel-likelihood recursions. Our achievability theorem requires no closed-form expression for the SIR: a rigorous lower bound exceeding the ensemble's limiting rate suffices to verify its condition. \Cref{app:noise} gives a concrete application using an analytic lower bound.

The proof combines established techniques with two additional estimates. Weight-distribution methods transfer random-coding bounds to structured code ensembles~\cite{ShulmanFeder1999}, and random cosets are used in LDPC achievability on arbitrary discrete memoryless channels~\cite{BennatanBurshtein2004}. A random coset adds one known random offset to every codeword. Revealing output-block boundaries and charging their entropy has precedents in deletion-channel information-rate analysis~\cite{FertonaniDuman2010}\cite[Lemma~1]{Dalai2011}. Here a strict MN candidate-counting estimate is combined with a posterior-threshold list that considers every partition of the unaligned reads. The inherited MN results and the additional steps are specified below; their supporting sources are also cited where each technique is used.

The analysis uses the observations available to the receiver. No insertion-count or drift bound is imposed on the channel, where drift is the difference between output and input positions. Although a comparison list discards exceptionally long outputs, the matched MAP likelihood still sums all paths consistent with each observed read. Thus the sufficient reliability condition uses the SIR of the untruncated channel.

\subsection{Main theorem: MAP achievability below the SIR}
The code has $N=2n$ transmitted binary variables, grouped into $n$ DNA symbols, $K=3N/\ell$ punctured variables that are not transmitted, and $N$ parity checks. A socket is an edge endpoint before matching. Each check has three sockets for each variable type. Punctured variables have degree $\ell$ and transmitted variables degree three; the degree counts their incident edges. The two socket matchings are uniform and independent, and define the binary matrices $A$ and $B$ connecting the two variable types to the checks. The extended code, which includes both transmitted and punctured variables, is the kernel of $[A\ B]$, and its transmitted projection is $\C(G)$. We impose $v_1=0$, retain that coordinate in transmission, and add a binary vector known to the encoder and decoder, called the offset. This single constraint costs exactly one information bit because the unrestricted code contains the all-one word. No invertibility condition is imposed on $B$.

\Needspace{25\baselineskip}
\begin{theorem}[Block-MAP achievability below the SIR]\label{thm:main}
Fix an integer $\ell\ge4$, a positive integer $t$, and channel parameters satisfying \cref{def:channel}. The $t$ reads are independent conditional on their common input. Let $J_t$ be the joint iid-uniform SIR in bits per stored nucleotide, and set $R=3/\ell$ and $r_\star=2R=6/\ell$.

Consider the MN ensemble of \cref{def:ensemble}, with one fixed transmitted bit retained in transmission and an independent uniform offset known to the decoder. As $n\to\infty$ through $K=6n/\ell\in\mathbb Z$, its actual transmitted rate tends to $r_\star$ in probability and in $L^1$. 

The error is averaged over graphs, offsets, uniform extended codewords, and channel outputs.
If $r_\star<J_t$, the average block-MAP error for the full extended word is $O(1/n)$. Deterministic graph--offset sequences attain rate $r_\star$ with message-average block error $O(1/n)$.

The degrees, channel parameters, and number of reads are fixed before taking the limit. The case $r_\star=J_t$ is left open.
\end{theorem}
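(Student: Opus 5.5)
The plan is to reduce block-MAP error to a list-plus-union-bound argument in which the offset makes every extended codeword look uniformly random to the channel, while the MN weight distribution controls how many competitors can fall into a list.

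\textbf{Rate claim.} The actual rate is $(N-\operatorname{rank}(\cdot)-1)/n$ after accounting for the fixed bit. It suffices to show that the dimension of $\C(G)$ equals $K$ plus a correction whose expectation is $O(1)$. For this I would invoke the established MN parity estimate: the expected number of nonzero extended words whose transmitted part is zero, and the expected rank deficiency of $[A\ B]$, are both $O(1)$ for $\ell\ge4$. The all-one word explains the single lost bit. Convergence in $L^1$ then follows from Markov's inequality and the bound $\mathrm{rate}\le 2$.

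\textbf{Reliability.} Because the offset is uniform and independent, the transmitted quaternary word $X=\C$-codeword $+$ offset is uniform on $\{0,1,2,3\}^n$ for each fixed graph. Hence the true input together with the $t$ reads $Y$ has exactly the iid-uniform joint law. Next I apply the candidate-list result of \cref{lem:channel-list}. Fix $\delta>0$ with $r_\star+2\delta<J_t$. The lemma yields a list $\mathcal L(Y)$ that is built from posterior thresholds over all partitions of each read into hidden output blocks, with over-long outputs discarded. It contains $X$ with probability $1-O(1/n)$ and has size at most $4^n2^{-n(J_t-\delta)}$. The block-MAP error is then at most $\Pr[X\notin\mathcal L]$ plus the probability that some other extended codeword $w\neq v$ maps into $\mathcal L$.

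For the second term, a competitor differs from $v$ by a nonzero extended codeword $u$. I split on the transmitted weight of $u$.
\begin{itemize}
\item If $u$ has zero transmitted part, the block error counts it, since it is a full extended-word error. Its expected number is $O(1/n)$ by the MN parity estimate; the fixed bit removes the all-one word.
\item If the transmitted part has weight $w\ge 1$, the probability that $X+u\in\mathcal L$ is at most $|\mathcal L|/4^n$ times a factor from the symmetry. Conditioning on $Y$, $X$ is uniform over $\mathcal L$-compatible inputs only approximately, so I would use the strict arbitrary-list bound: for any fixed list of size $M$, the expected number of codewords in the shifted coset lying in it is at most $M\,\E|\C|/4^n$ plus an excess governed by the distance spectrum.
\end{itemize}
The distance estimate for the $(\ell,3,3)$ ensemble shows that the expected number of codewords of weight $w\le \epsilon n$ is $O(1/n)$. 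For larger weights, the spectrum exponent is bounded by $r_\star$ plus an arbitrarily small term, so the sum is $\le 2^{n(r_\star+\delta)}\cdot 2^{-n(J_t-\delta)}\to0$ exponentially.

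\textbf{Derandomization.} Adding the terms gives an average error of $O(1/n)$. For the deterministic statement, Markov's inequality applied jointly to the error and to the rate deviation selects, for each $n$, a graph and offset whose rate lies within $o(1)$ of $r_\star$ and whose message-average error is at most a constant times the ensemble average.

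The step I expect to be the main obstacle is the arbitrary-list bound under sparse parity structure. The list depends on $Y$, hence on $X$, so one cannot treat competitors as independent. I would try to handle this by writing the competitor count as $\sum_{x'\in\mathcal L(Y)}\mathbf 1[x'-X\in\C]$, averaging over the offset and codeword first so that $X$ is uniform, and bounding $\Pr[x'-X\in\C\setminus\{0\}]$ through the weight distribution. Only the constant-order excess at low weight needs to be $O(1/n)$, and this is where the fixed degree $3$ and $\ell\ge4$ enter.
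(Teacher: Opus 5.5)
Your skeleton matches the paper's. The offset makes $(X,Y^{[t]})$, and hence $(X,\mathcal L(Y^{[t]}))$, independent of $G$. You use the partition-based list with $1-J_t/2<\eta<1-R$, a list decoder dominated by MAP, a good-graph event, and selection on a good-rate event. Your last paragraph also has the right decoupling.

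\textbf{The gap: the competitor count.} Your bound $2^{n(r_\star+\delta)}2^{-n(J_t-\delta)}$ is, up to slack, $\E|\C(G)|\cdot|\mathcal L|/4^n$, which is the independent-codeword heuristic. For a fixed nonzero difference $z$, $\Prob\{X+z\in\mathcal L(Y^{[t]})\}$ can equal one, for instance for a channel revealing only $\{X,X+\ones\}$. So the randomness must come from the graph. Conditionally on $(X,\mathcal L)$, the expected number of $x'\in\mathcal L$ with $X+x'\in\C(G)\setminus\{0\}$ is $\sum_{w\ge1}N_w\,p_w$. Here $N_w$ counts list words at distance $w$ from $X$, and $p_w=\Prob_G\{z\in\C(G)\}$ for any fixed $z$ of weight $w$.

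For this fixed-degree ensemble, $p_w$ is nowhere near $2^{-N(1-R)}$ at small linear weights. The $u=0$ term alone gives, for even $w=bN$, $p_w\ge\Prob\{Bz=0\}=\binom{N}{3w/2}3^{3w/2}/\binom{3N}{3w}$, which is roughly $2^{-N(3b/2)\log_2(1/b)}$ for small $b$. Symmetrically, near $b=1$ the $u=\ones_K$ term has exponent tending to zero, and $p_N=1$.

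Your argument uses only the list size, so take $\mathcal L$ to be a Hamming ball around $X$ of small relative radius $\rho$, with $h_2(\rho)\le\eta$ and $\rho$ just above your low-weight threshold. The expected competitor count is then at least about $2^{-N(\rho/2)\log_2(1/\rho)}$. This exceeds your $2^{-N(1-R-\eta)}$ whenever $(\rho/2)\log_2(1/\rho)<1-R-\eta$. So the excess over a binomial spectrum is not a constant-order low-weight effect, as your final sentence assumes. It is exponential across the range $h_2(b)<1-R$.

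\textbf{What is missing.} You need the paper's weight-by-weight count:
\begin{itemize}
\item Cap $N_w$ by $\min\{\binom Nw,2^{N\eta}\}$. At small $b$ it is the sphere size, not the list size, that saves you.
\item Apply the socket-count formula for every punctured weight $k$.
\item Show that the resulting exponent $\Psi_\eta(a,b)=(R-3)h_2(a)-3h_2(b)+\phi(a,b)+\min\{h_2(b),\eta\}$ is \emph{uniformly strictly} negative for $a\in[0,1]$ and $b\in[\delta,1-\delta]$.
\end{itemize}
The parity-entropy inequality alone gives only $\Psi_\eta\le0$ at $a\in\{0,1\}$ when $h_2(b)\le\eta$, and the polynomial factor $Q_N$ would swamp a zero exponent. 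The paper proves $\phi(a,b)<2h_2(b)$ at those endpoints and uses upper semicontinuity to extract $\kappa>0$.

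\textbf{The upper weight cap and the fixed bit.} The cap $b\le1-\delta$ is equally essential, and it is where the fixed bit enters. On the good graph, $v_1=0$ makes $v+\ones_N$ a nonzero codeword, so its weight exceeds $\delta N$. You placed the fixed bit in the zero-transmitted-part case, where it does nothing, since that case is just $\ker A=\{0\}$. You also never treated weights near $N$.

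\textbf{A smaller point on the rate.} Your rate step asserts $O(1)$ expected rank corrections without support. Only $o(n)$ is needed, and the paper gets $O(\log n)$ from $\E|\C_{\rm ext}(G)|\le Q_N2^K$ and $\E|\ker A|\le S_N$ via Jensen.
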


The theorem applies to the discrete rate family $6/\ell$, $\ell\ge4$, with a fixed positive gap $J_t-6/\ell$; it does not assert arbitrary rates below a prescribed SIR. The $O(1/n)$ constant may depend on this gap and can be large, so the bound does not predict the blocklength needed for a target error rate. A gap shrinking with $n$ is a different limiting regime. The SIR specifies an iid-uniform input law, whereas unrestricted channel capacity optimizes the input process. The theorem establishes reliability under MAP decoding; computationally efficient algorithms for decoding and for selecting deterministic codes require further results. Correct block recovery includes every punctured and transmitted bit, so the corresponding bit-error probabilities also tend to zero.

The limiting rate is measured per stored nucleotide, without division by the number of reads. The model concerns reads of an identified strand. Sampling from a pool, unknown strand identities, and strand loss require the additional indexing and sampling models of~\cite{Welter2026,Shomorony2021}. Sequence-dependent nanopore noise~\cite{Welter2026} and biochemical sequence constraints also require additional analysis.

\subsection{Extension from memoryless channels to synchronization errors}
MN codes use punctured variables to obtain dense effective constraints from a sparse graph~\cite[Sec.~VI-B]{MacKay1999}. The closest prior result is the fixed-degree capacity theorem for binary-input memoryless symmetric channels in~\cite[Thm.~1]{KasaiMAP2026}. That theorem already treats the same $(\ell,3,3)$ socket ensemble, block-MAP recovery of both punctured and transmitted variables, singular $B$, and convergence of the actual rate. These properties are inherited here. We reuse its parity entropy inequality, distance estimate, and exact socket-counting formula~\cite[Lemmas~1 and~8, Sec.~4.2]{KasaiMAP2026}; \cref{prop:rate} gives a direct rate proof from the same count. The additional result concerns synchronization errors: the same underlying graph ensemble, with one bit fixed and a known offset added, has vanishing block error when its limiting rate lies below the joint iid-uniform SIR of the unaligned DNA reads.

The memoryless-channel theorem cannot simply be applied after replacing capacity by SIR. Its comparison with an equal-capacity binary symmetric channel replaces observations one input coordinate at a time and relies on the product structure of a memoryless channel~\cite[Sec.~4.3, Lemmas~5--6]{KasaiMAP2026}. Although the IDS model processes input symbols independently, the emitted strings are concatenated and their boundaries are hidden. The receiver therefore lacks the input-indexed observations required by that comparison. A uniform-input information rate alone does not supply the channel comparison needed for correlated codeword coordinates.

The ensemble part of the extension is a candidate-counting estimate that does not assume a binary symmetric channel model. The proof for the binary symmetric channel in~\cite[Sec.~4.2]{KasaiMAP2026} counts noise vectors of a fixed Hamming weight and bounds their conditional entropy. Here \cref{lem:gap,lem:list} apply to an arbitrary candidate input list when the input and list are jointly independent of the graph. If its size is exponentially smaller than $2^{N(1-R)}$, the expected number of wrong candidates at intermediate Hamming distances is exponentially small. The proof requires the exponent to be uniformly strictly negative, including at the endpoints for the punctured-variable weight, rather than merely nonpositive. This controls the probability of a competing codeword directly.

\subsection{Proof outline}
The proof follows the receiver's task. First, \cref{sec:lists} constructs the set $\T_n(y^{[t]})$ of candidate input sequences whose full blocks satisfy the posterior-product threshold under at least one partition of the observed reads. Its size is at most $2^{2n\eta}$, and the true sequence is absent with probability $O(1/n)$, for a fixed $\eta>1-J_t/2$. The cost of trying all partitions becomes arbitrarily small per input symbol when the fixed block size is increased.

Next, \cref{sec:counting} asks whether another MN codeword can remain in that set. Fixing one bit removes the all-one difference, the distance estimate excludes differences near both extreme weights, and the socket count makes the expected number of the remaining competitors exponentially small when $\eta<1-R$. Averaging the uniform offset is what permits applying this graph count to a list chosen from channel observations.

The comparison decoder keeps only candidates satisfying the code constraints. On a good graph it succeeds whenever the true word is present and no competitor remains. MAP has no larger block error. Finally, \cref{sec:map-proof} combines the two conditions on $\eta$, which are compatible because $J_t>2R$, and selects deterministic graph--offset pairs with both vanishing error and the required actual rate. The supporting rate calculation is in \cref{app:rate}. \Cref{sec:model} supplies the definitions used in these steps.

\section{Channel, code, and rates}\label{sec:model}
\begin{definition}[Quaternary queue IDS channel]\label{def:channel}
The alphabet is $\mathcal A=\{\mathrm A,\mathrm C,\mathrm G,\mathrm T\}$. Fix a bijection $\beta:\mathcal A\to\F^2$ and write $\beta_n:\mathcal A^n\to\F^{2n}$ for its coordinatewise extension. Fix $p_I,p_D\ge0$ with $p_I+p_D\le1$ and $p_I<1$, and $p_S\in[0,1]$. While an input symbol is queued, an insertion occurs with probability $p_I$, appending an independent uniform symbol and keeping the same input queued. Deletion occurs with probability $p_D$ and consumes the input. Transmission occurs with probability $p_T=1-p_I-p_D$, consumes the input, and emits it unchanged with probability $1-p_S$, or one of the other three symbols uniformly with probability $p_S$. Processing stops when the last input is consumed. No insertion is made after that time.
\end{definition}
This is precisely the queue convention in~\cite[Sec.~II-A]{Maarouf2023}. Equivalently, each input independently emits $I$ uniform insertions and then either no symbol or a substituted copy, where
\begin{equation}\label{eq:queue}
 \Prob(I=k)=(1-p_I)p_I^k,\qquad
 d:=\Prob(\text{deletion after insertions})=\frac{p_D}{1-p_I},\qquad
 \lambda:=\E|Z|=\frac{1-p_D}{1-p_I}.
\end{equation}
The insertion count and this deletion indicator are independent. $Z$ denotes the emitted word. The observed read is the concatenation of these words; their boundaries are hidden. Notice the distinction between the event probability $p_D$ and the eventual deletion probability $d$.

For input position $i$ and read $j$, let $Z_i^{(j)}$ consist of all symbols emitted while processing input $i$, including preceding insertions and the final transmitted symbol when transmission occurs. Thus $Y^{(j)}=Z_1^{(j)}\cdots Z_n^{(j)}$. The individual words are not observed; only their concatenation is. Under iid input, the tuples $(X_i,Z_i^{(1)},\ldots,Z_i^{(t)})$ are independent across $i$.

For $t$ reads, write $Y^{[t]}=(Y^{(1)},\ldots,Y^{(t)})$. Conditional on $X^n$, the reads use independent channel randomness. Define
\begin{equation}\label{eq:information}
 F_{n,t}=H(X^n\mid Y^{[t]}),\qquad
 J_{n,t}=2-F_{n,t}/n,\qquad J_t=\lim_{n\to\infty}J_{n,t},
\end{equation}
for iid uniform $X_i$. All entropies and logarithms below are in bits. Existence of the limit is proved in \cref{lem:entropy}. Thus $J_{n,t}=I(X^n;Y^{[t]})/n$ and $0\le J_t\le2$, in bits per stored input nucleotide. The normalization divides by the input length, without an additional division by the number of reads or the output length. The term SIR refers to this specified iid input; no assertion that this input maximizes the IDS capacity is needed.

\begin{definition}[Channel likelihood and uniform-input posterior]\label{def:posterior}
Write $\mathcal A^*=\bigcup_{j\ge0}\mathcal A^j$ for all finite output words, including the empty word. For $x^m\in\mathcal A^m$ and $z\in\mathcal A^*$, let $W_m(z\mid x^m)$ be the probability that the channel in \cref{def:channel} emits exactly $z$. For a tuple $z^{[t]}\in(\mathcal A^*)^t$, set
\begin{equation}\label{eq:joint-likelihood}
 W_m^{[t]}(z^{[t]}\mid x^m):=\prod_{j=1}^t W_m(z^{(j)}\mid x^m),\qquad
 D_m(z^{[t]}):=\sum_{a^m\in\mathcal A^m}W_m^{[t]}(z^{[t]}\mid a^m).
\end{equation}
Under iid uniform input, the probability of the output tuple is $4^{-m}D_m(z^{[t]})$. Define
\begin{equation}\label{eq:block-posterior}
 P_m(x^m\mid z^{[t]}):=
 \begin{cases}
 W_m^{[t]}(z^{[t]}\mid x^m)/D_m(z^{[t]}),&D_m(z^{[t]})>0,\\
 0,&D_m(z^{[t]})=0.
 \end{cases}
\end{equation}
For positive $D_m$, this is the posterior under the uniform prior on all $4^m$ input words. For zero $D_m$, the all-zero assignment is a convention for rejecting an impossible output tuple, not a conditional probability distribution. We use $-\log_2 0=+\infty$ when defining posterior thresholds. The dependence of $P_m,D_m$ on the fixed $t$ and channel parameters is suppressed.
\end{definition}

An empty output and an impossible output are different notions: the empty word can have positive probability. Possibility is tested jointly over the $t$ reads. For example, in the noiseless channel, two different output words are individually possible but cannot be two reads of the same input. The posterior in \cref{eq:block-posterior} is defined before imposing code constraints; it is distinct from the posterior over extended codewords used by MAP.

The underlying graph law is the independent socket ensemble of~\cite[Def.~1]{KasaiMAP2026}. The following definition adds the one-bit constraint and known offset.

\begin{definition}[MN socket ensemble and its restriction]\label{def:ensemble}
Let $N=2n$, $K=RN\in\mathbb Z$, and $R=3/\ell$. Uniformly match $\ell K=3N$ punctured sockets to three sockets at each of $N$ checks. Independently match three sockets at each transmitted variable to the other three check sockets. Reduce edge multiplicities modulo two to obtain $A\in\F^{N\times K}$ and $B\in\F^{N\times N}$. Let $G$ denote the two matchings. Set
\begin{align}
 \C_{\rm ext}(G)&=\{(u,v)\in\F^K\times\F^N:Au+Bv=0\},&
 \C(G)&=\{v\in\F^N:\exists u\in\F^K,\ Au+Bv=0\},\label{eq:code}\\
 \C_{\rm ext}^{0}(G)&=\{(u,v)\in\C_{\rm ext}(G):v_1=0\},&
 \C^{0}(G)&=\{v\in\C(G):v_1=0\}.\label{eq:shortened}
\end{align}
The pair $(U,V)$ is uniform on $\C_{\rm ext}^{0}(G)$. Independently choose $C$ uniformly in $\F^N$ and reveal it to encoder and decoder. The stored binary vector is $X=V+C$, and the nucleotide word sent through the channel is $X^n=\beta_n^{-1}(X)$. The coordinate $v_1$ is retained, so the physical length is $n$ nucleotides. For fixed $G,C$, block-MAP uses the matched joint likelihood of all reads.
\end{definition}

The offset uses the random-coset construction~\cite[Def.~3]{BennatanBurshtein2004}. We prove the required independence in \cref{prop:list-map}; the memoryless-channel theorem of that reference is not invoked for the IDS channel.

The projection distinguishes what is stored from the auxiliary variables used to describe it. If $u_0$ is one solution for a fixed $v$, every other solution is $u_0+z$ with $z\in\ker A$. Thus different extended words can produce exactly the same stored sequence. The physical channel cannot distinguish those punctured partners, even from noiseless reads. This is why the actual transmitted rate counts distinct $v$'s, and why recovery of the full extended word additionally requires injectivity of $A$.

All vector sums and matrix operations are over $\F$. The notation $\wt(v)$ counts the nonzero coordinates of $v$, and $\ones_d$ denotes the all-one vector of length $d$.

Every fiber of the projection to $v$ has size $2^{\dim\ker A}$, so $V$ is uniform on $\C^0(G)$, including when either matrix is rank deficient. Odd check degrees on both sides imply $(\ones_K,\ones_N)\in\C_{\rm ext}(G)$. Consequently
\begin{equation}\label{eq:rate}
 k_0(G):=\dim\C^0(G)=\dim\C(G)-1,\qquad
 r_n(G)=k_0(G)/n\longrightarrow 2R.
\end{equation}
\Cref{prop:rate} proves this convergence in $L^1$ directly, with mean absolute deviation $O(\log n/n)$. Conditioning on matrix invertibility would define a different ensemble; it is unnecessary here.

For any fixed $G=g$ and allowed $V=v$, the identity $\Prob(X=x\mid G=g,V=v)=2^{-N}$ follows because exactly one offset, $C=x+v$, produces $x$. Averaging over $V$ therefore leaves the same uniform law for every graph. This uniformity is obtained by averaging over $C$. Once both $G$ and $C$ are fixed, $X$ is instead uniform on the translated code $C+\C^0(G)$. The proof uses the former law to construct graph-independent channel lists, and the latter law when defining the error of each individual code.

\begin{definition}[Full-word MAP error]\label{def:error}
For a fixed graph $g$, offset $c$, and read tuple $y^{[t]}$, the MAP decoder chooses
\begin{equation}\label{eq:map-decoder}
 (\widehat U,\widehat V)_{g,c}(y^{[t]})
 \in\mathop{\rm arg\,max}_{(u,v)\in\C_{\rm ext}^{0}(g)}
 W_n^{[t]}\bigl(y^{[t]}\mid\beta_n^{-1}(v+c)\bigr).
\end{equation}
Use a fixed ordering to resolve ties. For observations with positive probability under this code, likelihood maximization equals posterior maximization because the prior is uniform on the extended code. The same fixed rule defines the decoder on impossible observations. Define
\begin{equation}\label{eq:error-definition}
 p_n(g,c):=\Prob\{(\widehat U,\widehat V)_{g,c}(Y^{[t]})\ne(U,V)\mid G=g,C=c\},
 \qquad \overline P_{\rm block}^{\rm MAP}:=\E_{G,C}p_n(G,C).
\end{equation}
The probability in $p_n(g,c)$ averages over the uniform extended word and channel randomness. A block error means that at least one punctured or transmitted variable is wrong.
\end{definition}

For a general channel $W_N(y\mid x)$ on binary inputs, the same definitions replace the likelihood in \cref{eq:map-decoder} by $W_N(y\mid v+c)$; we write the average error as $\overline P_{\rm block}^{\rm MAP}(W_N)$. This convention is used in \cref{prop:list-map}. The offset is a code-design parameter: the existence proof later fixes one $c$ together with one graph at each length.

\section{Small lists from unaligned DNA reads}\label{sec:lists}
We first use the channel output to narrow the possible stored sequences. The list must contain the true sequence with high probability, yet be small enough that the MN parity constraints can eliminate its other members. The conditional uncertainty per nucleotide tends to $2-J_t$, so the target logarithmic list size is slightly larger than $n(2-J_t)$. The graph estimate in \cref{lem:list} permits lists of size at most $2^{N\eta}$ for any fixed $\eta<1-R$. These requirements are compatible precisely when $J_t>2R$.

The obstacle is that an input block does not have a known output block. We resolve it by trying every consecutive partition of the reads. First we bound the information hidden in those boundaries; then we define the list and prove its two required properties separately.

\subsection{How much information is hidden in the boundaries?}
The reveal-and-hide comparison for output-block boundaries is used in deletion-channel analysis~\cite{FertonaniDuman2010}\cite[Lemma~1]{Dalai2011}. We adapt it to iid uniform input and $t$ IDS reads, including the cost of unbounded output lengths.

\begin{lemma}[Entropy limit and a finite-block bracket]\label{lem:entropy}
For the IDS model with fixed $t$, the limit in \cref{eq:information} exists. With
\begin{equation}\label{eq:geom-entropy}
 g(z)=(z+1)\log_2(z+1)-z\log_2z,\quad g(0)=0,
\end{equation}
every positive integer $m$ satisfies
\begin{equation}\label{eq:bracket}
       J_{m,t}-\frac{t g(m\lambda)}{m}\le J_t\le J_{m,t}.
\end{equation}
Also $J_t$ is nondecreasing in $t$.
\end{lemma}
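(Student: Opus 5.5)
The plan is to prove the statement through sub- and superadditivity of $F_{n,t}=H(X^n\mid Y^{[t]})$, obtained by revealing or hiding the output-block boundaries. Split the input into consecutive blocks of lengths $m_1,\dots,m_k$ with $\sum m_i=n$. Because the tuples $(X_i,Z_i^{(1)},\dots,Z_i^{(t)})$ are independent across $i$, each block's input and emitted substrings $(X^{(b)},Y^{[t],(b)})$ are independent across blocks, with the law of the $m_b$-channel.

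\emph{Superadditivity.} Let $S$ be the vector of the $t(k-1)$ boundary positions in the reads. Conditioning reduces entropy, so
\[
H(X^n\mid Y^{[t]})\ge H(X^n\mid Y^{[t]},S)=\sum_b F_{m_b,t}.
\]
The equality holds because $(Y^{[t]},S)$ is equivalent to the collection of block outputs, and these are independent across blocks. Hence $F_{a+b,t}\ge F_{a,t}+F_{b,t}$. Fekete's lemma then gives that $F_{n,t}/n$ converges to $\sup_n F_{n,t}/n$, which is finite because $F_{n,t}\le 2n$. So $J_t$ exists and $J_t=\inf_n J_{n,t}\le J_{m,t}$ for every $m$, which is the upper half of \cref{eq:bracket}.

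\emph{Reverse bound.} In the other direction,
\[
H(X^n\mid Y^{[t]})\le H(X^n\mid Y^{[t]},S)+H(S\mid Y^{[t]})\le\sum_b F_{m_b,t}+H(S).
\]
Revealing $S$ is equivalent to revealing the output lengths $L^{(j)}_b=|Y^{[t],(b)}_j|$, and these lengths are independent across blocks and reads. It therefore suffices to bound $H(L)$, where $L$ is the length of one read's output from an $m$-block, whose mean is $m\lambda$ by \cref{eq:queue}. The maximum-entropy law on $\{0,1,2,\dots\}$ with mean $\mu$ is geometric, with entropy $g(\mu)$; this is exactly \cref{eq:geom-entropy}. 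Hence $H(S)\le k\,t\,g(m\lambda)$ when all blocks have length $m$.

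\emph{Lower half of the bracket.} Take $n=km$, so $F_{km,t}\le k F_{m,t}+k\,t\,g(m\lambda)$. Dividing by $km$ and letting $k\to\infty$ along this subsequence, which is legitimate since the limit exists, gives $2-J_t\le 2-J_{m,t}+t g(m\lambda)/m$, i.e.\ $J_t\ge J_{m,t}-t g(m\lambda)/m$. One point needs care: there is no insertion after the last input, so the last block's output also has mean $m\lambda$ and the same law. Concatenating blocks therefore reproduces the full channel exactly, which justifies both the superadditivity equality and the block-length bound.

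\emph{Monotonicity in $t$.} Adding a read can only reduce conditional entropy: $H(X^n\mid Y^{[t+1]})\le H(X^n\mid Y^{[t]})$. Hence $J_{n,t}$ is nondecreasing in $t$ for each $n$, and so is the limit $J_t$.

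The main obstacle is the justification that conditioning on $S$ factorizes the conditional entropy exactly. This requires showing that the pair (reads, boundaries) is in bijection with the tuple of block outputs, and that the block outputs have exactly the $m_b$-channel law $W_{m_b}$ under the queue convention. The geometric maximum-entropy bound is standard by comparison.
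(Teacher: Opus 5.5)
Your proposal is correct and follows the paper's proof essentially step by step. Revealing the output-block lengths gives superadditivity, and Fekete's lemma then gives the limit and the upper half of the bracket. Hiding those lengths costs at most $H(\boldsymbol L)\le q\,t\,g(m\lambda)$ by the geometric maximum-entropy bound, and monotonicity in $t$ follows because extra reads reduce conditional entropy. Your claim that the lengths are independent across blocks and reads is true, since lengths do not depend on the input, but subadditivity of entropy, which is what the paper uses, already suffices.
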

\begin{proof}
Cut the input after $a$ symbols and reveal, in each read, the output length generated by those symbols. The receiver can then separate two independent iid-input channel blocks. Because this extra information can only reduce conditional entropy, $F_{a+b,t}\ge F_{a,t}+F_{b,t}$. The bounds $0\le F_{n,t}\le2n$ and Fekete's lemma for superadditive sequences give $\lim_n F_{n,t}/n=\sup_m F_{m,t}/m$. In particular, $F_{m,t}/m\le2-J_t$.

To bound the cost of hiding the cuts, partition $qm$ input symbols into $q$ blocks of length $m$. Let $L_{b,j}$ be the number of output symbols generated by block $b$ in read $j$, and let $\boldsymbol L=(L_{b,j})_{b,j}$. The concatenated reads together with $\boldsymbol L$ determine all block outputs, and conversely those outputs determine the concatenated reads and $\boldsymbol L$. Thus $H(X^{qm}\mid Y^{[t]},\boldsymbol L)=qF_{m,t}$. Hiding the lengths increases this entropy by $I(X^{qm};\boldsymbol L\mid Y^{[t]})\le H(\boldsymbol L)$. Subadditivity of entropy therefore gives
\begin{equation}\label{eq:cut-entropy}
 qF_{m,t}\le F_{qm,t}\le qF_{m,t}+qtH(L_{1,1})
                           \le qF_{m,t}+qt g(m\lambda).
\end{equation}
The last bound uses only that $L_{1,1}$ is a nonnegative integer with mean $m\lambda$. We use the relative-entropy method for maximum-entropy bounds~\cite[Sec.~12.1]{CoverThomas2006} and verify the needed discrete case directly. Indeed, for such a variable with mean $\mu>0$, comparison with the geometric law $q_k=(1+\mu)^{-1}(\mu/(1+\mu))^k$ gives $H(L)\le-\E\log_2 q_L=g(\mu)$ by nonnegativity of relative entropy. At $\mu=0$, the variable is zero almost surely. Divide \cref{eq:cut-entropy} by $qm$ and let $q\to\infty$ to obtain \cref{eq:bracket}. Additional reads reduce conditional entropy, proving monotonicity.
\end{proof}

The condition of \cref{thm:main} can be verified whenever a rigorous lower bound $L\le J_t$ exceeds the code rate $6/\ell$. The theorem itself applies throughout $6/\ell<J_t$, independently of the sharpness of a particular bound $L$. The finite-block bracket in \cref{eq:bracket} converts the exact information quantity $J_{m,t}$ into SIR bounds; a numerical evaluation must also account for its computational error. \Cref{app:noise} gives an application using an analytic lower bound.

\subsection{Defining the candidate set from the observed reads}
A candidate list here is a finite set: order and repeated occurrences have no significance. Its elements need not satisfy the code constraints, which will be imposed later. The adjective small refers to its exponential size bound, not to a polynomial-time construction.

For a tentative partition, we keep the input sequences whose blockwise posterior probabilities have a sufficiently large product. This uses the length-$m$ posterior $P_m$ in \cref{def:posterior}, computed under iid uniform input, before imposing any code constraints. We then take the union over all partitions. A word is included if at least one partition admits it.

\begin{definition}[Output partitions and candidate lists]\label{def:channel-list}
Fix an integer $m\ge1$, a number $\epsilon>0$, a length cutoff $\Lambda>\lambda$, and an integer $n_0\ge m$. These are fixed parameters of the list construction, independent of the graph, offset, and observed reads.

For $n\ge n_0$, write $n=qm+s$, where $q=\lfloor n/m\rfloor$ and $0\le s<m$, and put $k=q+\mathbf1_{\{s>0\}}$. For $y^{[t]}\in(\mathcal A^*)^t$, define $\Pi_k(y^{[t]})$ as the set of arrays of cut indices
\begin{equation}\label{eq:partitions}
 \pi=(c_{b,j}):\quad
 0=c_{0,j}\le c_{1,j}\le\cdots\le c_{k,j}=|y^{(j)}|,
 \qquad 1\le j\le t.
\end{equation}
For such a partition, $y_b^{(j)}$ is the substring of $y^{(j)}$ in positions $c_{b-1,j}+1,\ldots,c_{b,j}$, and $y_b^{[t]}=(y_b^{(1)},\ldots,y_b^{(t)})$. Equal cut indices give an empty substring.

The set admitted by $\pi$ is
\begin{equation}\label{eq:partition-list}
 \mathcal L_{n,\pi}(y^{[t]})=
 \left\{x_1\cdots x_q a:\;
 \begin{array}{l}
 x_b\in\mathcal A^m\ (1\le b\le q),\quad a\in\mathcal A^s,\\[2pt]
 \displaystyle\prod_{b=1}^q P_m(x_b\mid y_b^{[t]})
       \ge 2^{-q(F_{m,t}+m\epsilon)}
 \end{array}\right\}.
\end{equation}
Here $\mathcal A^0$ consists of the empty word, and juxtaposition denotes concatenation. If $s>0$, the final output tuple $y_{q+1}^{[t]}$ is not used in the threshold, and every trailing input $a\in\mathcal A^s$ is allowed. With the convention for outputs of marginal probability zero in \cref{def:posterior}, a partition having an impossible full-block output contributes the empty set.

Finally define, for every positive integer $n$ and every observed tuple,
\begin{equation}\label{eq:T-definition}
 \T_n(y^{[t]})=
 \begin{cases}
 \displaystyle\bigcup_{\pi\in\Pi_k(y^{[t]})}\mathcal L_{n,\pi}(y^{[t]}),
   & n\ge n_0\text{ and }\max_j|y^{(j)}|\le\Lambda n,\\
 \varnothing,&\text{otherwise}.
 \end{cases}
\end{equation}
Thus $\T_n(y^{[t]})\subseteq\mathcal A^n$; its binary image is $\beta_n(\T_n(y^{[t]}))\subseteq\F^{2n}$.
\end{definition}

The cuts in \cref{eq:partitions} specify only boundaries between input blocks, not individual insertion or deletion positions. For two blocks and the read $\mathrm{ACG}$, the four partitions are $(\text{empty},\mathrm{ACG})$, $(\mathrm A,\mathrm{CG})$, $(\mathrm{AC},\mathrm G)$, and $(\mathrm{ACG},\text{empty})$. Empty strings must be allowed because a whole input block can disappear. Trying all partitions includes the actual boundaries without giving them to the decoder.

The product in \cref{eq:partition-list} is a scoring distribution for a proposed split. It need not equal the posterior for the unsegmented read. When all full-block outputs are possible, it is nonetheless normalized over $(x_1,\ldots,x_q)$; that fact will bound the number of candidates. At the actual split, the independent channel blocks additionally make this score concentrate. These are distinct reasons for using the same product.

The threshold uses the probability-mass counting principle behind typical-set cardinality bounds~\cite[Thm.~3.1.2]{CoverThomas2006}: if every candidate has probability at least a threshold and total mass is at most one, there are at most the reciprocal threshold many candidates. Here it is applied to a conditional product score for each partition. For the countable output alphabet, we also prove the finite-variance property and the $O(1/n)$ miss bound directly at the true partition.

\begin{lemma}[Channel lists with vanishing miss probability]\label{lem:channel-list}
For every $\eta>1-J_t/2$, the parameters in \cref{def:channel-list} can be chosen so that, under iid uniform input,
\begin{equation}\label{eq:channel-list}
 |\T_n(y^{[t]})|\le2^{2n\eta}\quad\text{for every }y^{[t]},\qquad
 \Prob\{X^n\notin\T_n(Y^{[t]})\}=O(1/n).
\end{equation}
The lists use no graph, offset, or revealed alignment.
\end{lemma}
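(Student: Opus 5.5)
The plan is to prove the two halves of \cref{eq:channel-list} separately: a deterministic counting bound over all partitions, and a Chebyshev bound at the true (hidden) partition. I would fix $\eta>1-J_t/2$ and set $\delta=2\eta-(2-J_t)>0$. The parameters are then chosen in this order: first $\epsilon=\delta/4$; then $m$ so large that the per-symbol partition cost below is at most $\delta/4$; then any $\Lambda>\lambda$; finally $n_0$ large enough to absorb the additive $O(m)$ terms.

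\emph{Size bound.} Fix $y^{[t]}$ with $\max_j|y^{(j)}|\le\Lambda n$ and $n\ge n_0$; otherwise $\T_n=\varnothing$. For a single partition $\pi$, each full-block factor $P_m(\cdot\mid y_b^{[t]})$ is either a probability distribution on $\mathcal A^m$ or identically zero. Hence the product over $b$ has total mass at most one over $(x_1,\ldots,x_q)$. By the mass-counting principle, at most $2^{q(F_{m,t}+m\epsilon)}$ tuples meet the threshold, and the trailing word contributes a factor $4^s\le4^m$.

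The number of partitions is at most $\binom{\Lambda n+k}{k}^t$, since each read has $k-1$ weakly increasing cuts in at most $\Lambda n+1$ positions. This is at most $2^{t(\Lambda n+k)h_2(k/(\Lambda n+k))}$. With $k\approx n/m$, this is $2^{n\,t(\Lambda+1/m)h_2(1/(m\Lambda+1))+o(n)}$, and the exponent per symbol is $O(t\log(m\Lambda)/m)\to0$ as $m\to\infty$. \Cref{lem:entropy} (via Fekete) gives $F_{m,t}/m\le2-J_t$. Combining,
\[
\log_2|\T_n|\le n(2-J_t)+n\epsilon+n\,t(\Lambda+1/m)h_2\bigl(1/(m\Lambda+1)\bigr)+O(m)+o(n),
\]
which is at most $2n\eta$ for the chosen $m$ and all $n\ge n_0$.

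\emph{Miss probability.} The event $X^n\notin\T_n(Y^{[t]})$ is contained in the union of two events: some read has length exceeding $\Lambda n$, or the true partition $\pi^\ast$ (cuts at the actual block output lengths) fails the threshold. The true partition lies in $\Pi_k$ whenever the length condition holds. Each $|Y^{(j)}|$ is a sum of $n$ iid per-input lengths, each distributed as $I$ plus a Bernoulli term, with mean $\lambda$ and finite variance because $p_I<1$. Chebyshev with $\Lambda>\lambda$ therefore gives probability $O(1/n)$ per read.

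At $\pi^\ast$, the tuples $(X_b,Y_b^{[t]})$, $b\le q$, are iid copies of a length-$m$ iid-uniform block and its $t$ reads. Each $D_m(Y_b^{[t]})>0$ almost surely, so $P_m(X_b\mid Y_b^{[t]})$ is the true posterior. Thus $S_b=-\log_2P_m(X_b\mid Y_b^{[t]})$ are iid with mean $H(X^m\mid Y^{[t]})=F_{m,t}$, and failure is exactly the event $\sum_bS_b>q(F_{m,t}+m\epsilon)$.

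The step needing care, because the output alphabet is countable, is a finite second moment. I would handle it by conditioning on $Y_b^{[t]}$: the posterior lives on $4^m$ points, and $\sum_{x}p_x\log_2^2p_x\le(2m)^2+c$ for a universal constant $c$ on any distribution with $4^m$ atoms. Hence $\E S_b^2\le 4m^2+c$ uniformly, regardless of the unbounded output lengths. Chebyshev then bounds the failure probability by $(4m^2+c)/(q m^2\epsilon^2)=O(1/n)$ with $m,\epsilon$ fixed. The trailing $s$ inputs are unconstrained, so they cause no miss.

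The construction involves only $y^{[t]}$, $m$, $\epsilon$, $\Lambda$, and $n_0$, so it uses no graph, offset, or alignment. I expect the main obstacle to be balancing the partition-count exponent against $\epsilon$ and $\Lambda$. This is resolved by choosing $\Lambda$ independently of $m$ and letting $m$ grow, since the per-symbol cost $t(\Lambda+1/m)h_2(1/(m\Lambda+1))$ tends to zero for any fixed $\Lambda$.
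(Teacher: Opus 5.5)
Your proposal is correct and follows the paper's proof essentially step for step. Both arguments use the same structure: a per-partition mass-counting bound with a $4^s$ trailing factor, a union over all $\binom{j+k-1}{k-1}$ cut arrays, $F_{m,t}/m\le 2-J_t$ from Fekete, and Chebyshev both for the read lengths and for the posterior surprises at the true partition.

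Your differences are only cosmetic. Your entropy-form binomial bound gives a per-symbol cost of at most $\frac{t}{m}\log_2\bigl(e(1+m\Lambda)\bigr)$, essentially the paper's $a_m$. Your direct bound on $\sum_x p_x\log_2^2p_x$ replaces the paper's tail estimate; it holds, and any finite bound uniform in the output, even $4^m\max_p p\log_2^2p$, would suffice since $m$ is fixed.

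One ordering slip: as your last paragraph says, $\Lambda$ must be fixed before $m$, because the partition cost depends on $\Lambda$. Your first paragraph lists $\Lambda$ after $m$.
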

\begin{proof}
We first choose the parameters, then bound the list size for every observation, and finally bound the probability of losing the true word. Put $\Delta=2\eta-(2-J_t)>0$ and fix any $\Lambda>\lambda$. Choose a finite $m$ large enough that $a_m<\Delta/2$, where
\begin{equation}\label{eq:cut-count}
 a_m:=\frac{t}{m}\log_2\!\bigl(e(1+2\Lambda m)\bigr).
\end{equation}
This is possible because $a_m\to0$. Choose $\epsilon=\Delta/4$ and set
\begin{equation}\label{eq:list-parameters}
 n_0:=\max\{m,\lceil 8(m-1)/\Delta\rceil\}.
\end{equation}
All these quantities are fixed before $n$ grows. The remaining exponent margin exceeds $\Delta/4$ per nucleotide and will absorb the final partial block.

Consider $n\ge n_0$ and an output passing the length cutoff. For one partition, each admitted full-block sequence has product probability at least $2^{-q(F_{m,t}+m\epsilon)}$. Their probabilities sum to at most one, so there are at most $2^{q(F_{m,t}+m\epsilon)}$ such sequences. Allowing every trailing word multiplies this count by $4^s$.

A word of length $j$ has exactly $\binom{j+k-1}{k-1}$ partitions into $k$ consecutive, possibly empty strings: its $k$ substring lengths are nonnegative integers summing to $j$. For both $s=0$ and $s>0$, this count is at most $\binom{j+q}{q}$. Using $\binom{j+q}{q}\le[e(1+j/q)]^q$ and $j\le\Lambda n$, the union in \cref{eq:T-definition} satisfies
\begin{equation}\label{eq:remainder-list}
 \begin{aligned}
 |\T_n(y^{[t]})|
 &\le 2^{q(F_{m,t}+m\epsilon)+2s}
           \bigl[e(1+\Lambda n/q)\bigr]^{tq}\\
 &\le 2^{n(F_{m,t}/m+\epsilon+a_m)+2(m-1)}.
 \end{aligned}
\end{equation}
For the second inequality, $q=\lfloor n/m\rfloor\ge n/(2m)$ gives $n/q\le2m$, while $q\le n/m$ and $s\le m-1$. Thus trying all partitions costs at most $na_m$ bits in the logarithmic list size, and allowing the trailing input costs at most $2(m-1)$ bits. Since $F_{m,t}/m\le2-J_t$, $\epsilon+a_m<3\Delta/4$, and $2(m-1)\le n\Delta/4$ for $n\ge n_0$, the exponent is at most $n(2-J_t+\Delta)=2n\eta$. For outputs failing the cutoff and for $n<n_0$, the empty-list definition already gives the same size bound.

We next check inclusion of the true word. Let $Y_b^{[t]}$ be the outputs actually generated by the $b$th full input block $X_b$. The resulting partition is one of those in \cref{eq:partitions}. Under iid uniform input, the block pairs $(X_b,Y_b^{[t]})$ are iid. Their posterior surprises $Z_b=-\log_2P_m(X_b\mid Y_b^{[t]})$ have mean $F_{m,t}$. A realized word has positive posterior almost surely, so these variables are well-defined. The true word passes the product threshold exactly when
\begin{equation}\label{eq:posterior-list}
 \sum_{b=1}^q Z_b\le q(F_{m,t}+m\epsilon).
\end{equation}
Thus a large posterior surprise means that the block is unusually difficult to identify from its reads; the threshold allows a positive margin above the mean total uncertainty.

Although output lengths are unbounded, $Z_b$ has finite variance. For every possible block-output tuple $z^{[t]}$ and $u\ge0$,
\begin{equation}\label{eq:surprise-tail}
 \Prob\{Z_b>u\mid Y_b^{[t]}=z^{[t]}\}\le\min\{1,4^m2^{-u}\}.
\end{equation}
There are at most $4^m$ input words with posterior less than $2^{-u}$, and their probabilities sum to at most $4^m2^{-u}$. The resulting exponential tail gives a finite second moment, uniformly in the output. Let $\sigma_m^2=\Var(Z_1)<\infty$. Chebyshev's inequality bounds failure of \cref{eq:posterior-list} by $\sigma_m^2/(qm^2\epsilon^2)$.

Let $v_L=\Var(|Z|)<\infty$ for the per-symbol emitted word $Z$ in \cref{eq:queue}. Each read has mean length $n\lambda$ and variance $nv_L$, so the probability that any read exceeds $\Lambda n$ is at most $t v_L/[n(\Lambda-\lambda)^2]$. On the complement of these two failures, the true full-block sequence satisfies the product threshold, and the actual trailing word is allowed. Hence the union contains $X^n$. Therefore, for $n\ge n_0$,
\begin{equation}\label{eq:list-miss-bound}
 \Prob\{X^n\notin\T_n(Y^{[t]})\}
 \le \frac{t v_L}{n(\Lambda-\lambda)^2}
       +\frac{\sigma_m^2}{qm^2\epsilon^2}=O(1/n).
\end{equation}
Here $q=\lfloor n/m\rfloor\ge n/(2m)$ because $n\ge m$. The finitely many lengths below $n_0$ are covered by increasing the constant in the $O(1/n)$ bound. The cutoff only declares failure of this comparison list on rare outputs; it never truncates the physical channel or the MAP likelihood. Incorrect trial partitions can add candidates, but cannot remove the true word once the actual partition admits it.
\end{proof}

\section{A strict candidate-counting bound for MN codes}\label{sec:counting}
We now ask whether the channel list can contain a second MN codeword. A distance estimate and the one-bit restriction first confine every competing difference to an intermediate Hamming-weight interval. We then count parity-compatible differences in an arbitrary list of the allowed size. A uniform negative exponent makes their expected number small; the final comparison decoder turns that count into a block-error bound.

\subsection{Two established ensemble estimates}
Write $h_2(a)=-a\log_2a-(1-a)\log_2(1-a)$, with $0\log_2 0=0$, for binary entropy. Define the check polynomial and its entropy bound by
\begin{align}
 f(x,y)&=\frac{(1+x)^3(1+y)^3+(1-x)^3(1-y)^3}{2},\label{eq:polynomial}\\
 \phi(a,b)&=\inf_{x,y>0}\{\log_2 f(x,y)-3a\log_2x-3b\log_2y\}.\label{eq:phi}
\end{align}
For feasible means, $\phi$ is the maximum entropy of an even-parity six-bit vector whose first and last triples have mean fractions $a,b$. Explicitly, a feasible law is supported on $\{s\in\{0,1\}^6:\sum_{i=1}^6s_i\text{ is even}\}$ and satisfies $\E\sum_{i=1}^3s_i=3a$ and $\E\sum_{i=4}^6s_i=3b$. Symmetrization within each triple makes its individual marginals equal to those means. For infeasible means set $\phi=-\infty$. This identification is the entropy--log-partition duality~\cite[Sec.~3.6, Thm.~3.4(a)]{WainwrightJordan2008}. Boundary means use its limiting extension; a finite positive minimizing pair $x,y$ need not exist there.

The variables $x,y$ in $f$ record the numbers of active sockets in the two triples; they are formal counting variables, not channel symbols. Averaging the two products in \cref{eq:polynomial} cancels exactly the monomials of odd total degree. The coefficient of a surviving monomial $x^i y^j$ counts assignments with $i$ active punctured sockets, $j$ active transmitted sockets, and an even total. Multiplying $N$ copies counts assignments at all checks, while coefficient extraction fixes the total numbers of active sockets. The function $\phi$ bounds the exponential growth of this count at prescribed mean fractions.

\begin{lemma}[Inputs from the MN ensemble analysis]\label{lem:inputs}
For all $a,b\in[0,1]$,
\begin{equation}\label{eq:parity-bound}
 \phi(a,b)\le 2h_2(a)+2h_2(b)+h_2(a)h_2(b).
\end{equation}
For each fixed $\ell\ge4$, there is $0<\delta<1/2$ such that the good-graph event
\begin{equation}\label{eq:good-graph}
 \mathcal G_N:=\{\ker A=\{0\}\text{ and }\wt(v)>\delta N
                   \text{ for every }v\in\C(G)\setminus\{0\}\}
\end{equation}
satisfies $\Prob(\mathcal G_N^c)=O(1/N)$.
\end{lemma}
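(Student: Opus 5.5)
The plan is to import both statements from the fixed-degree analysis of the same $(\ell,3,3)$ socket ensemble in~\cite{KasaiMAP2026}, and to check that they apply verbatim. The graph law of \cref{def:ensemble} before the one-bit restriction and the offset is exactly the independent socket ensemble of~\cite[Def.~1]{KasaiMAP2026}. The event $\mathcal G_N$ depends only on $G$, not on $C$ or on the restriction $v_1=0$. So no new probabilistic work should be needed; the proof will mostly be a careful matching of notation. I will nevertheless outline why each estimate holds, so that the dependence on $\ell\ge4$ is visible.

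For \cref{eq:parity-bound}, I will use the variational description of $\phi$ given just before the lemma: $\phi(a,b)$ is the maximal entropy $H(S)$ of an even-parity vector $S\in\{0,1\}^6$ whose two triples have mean fractions $a,b$.
\begin{itemize}
\item Infeasible pairs give $-\infty$ and need no argument. For feasible pairs, symmetrize within each triple, so that every coordinate of the first triple is $\mathrm{Bernoulli}(a)$ and every coordinate of the second is $\mathrm{Bernoulli}(b)$.
\item Apply the chain rule. The pair $(s_1,s_2,s_4,s_5)$ contributes at most $2h_2(a)+2h_2(b)$. Because $s_3+s_6$ is fixed by parity, the remaining term is $H(s_3\mid s_1,s_2,s_4,s_5)$.
\item The required step is to bound this residual term, averaged against the first four coordinates, by $h_2(a)h_2(b)$.
\end{itemize}
This is the delicate inequality of~\cite[Lemma~1]{KasaiMAP2026}, and I will cite it rather than reprove it. A naive bound such as $\min\{h_2(a),h_2(b)\}$ does not suffice.

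For the good-graph event, I will follow the first-moment method of~\cite[Lemma~8, Sec.~4.2]{KasaiMAP2026}.
\begin{enumerate}
\item Express the expected number of nonzero $(u,v)\in\C_{\rm ext}(G)$ with $\wt(u)=aK$ and $\wt(v)=bN$ through the exact socket count. This is a coefficient of $f(x,y)^N$ divided by the numbers of socket choices, namely $\binom{3N}{3aK\ldots}$-type terms for punctured sockets of degree $\ell$ and transmitted sockets of degree three.
\item Show that the resulting exponent is strictly negative for $0<b\le\delta$ and all $a$, including $b=0$ with $a>0$. The last case is exactly the statement $\ker A=\{0\}$.
\item Handle the linear-weight range by the exponent computation. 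Handle sublinear weights by a direct combinatorial count, in which a nonzero codeword of small total weight forces a small stopping-set-like structure in the socket graph.
\end{enumerate}
With punctured degree $\ell\ge4$ and transmitted degree three, the dominant small configurations have expected number $O(1/N)$. This gives $\Prob(\mathcal G_N^c)=O(1/N)$ rather than merely $o(1)$.

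The main obstacle is this small-weight regime. There the exponent tends to zero, so the polynomial factors in the socket count must be tracked explicitly rather than through $\phi$. This is also where the degree condition $\ell\ge4$ is used. My plan is to defer to the cited exact count and verify only that the $O(1/N)$ rate stated there refers to the same graph law, with multiplicities reduced modulo two. Finally, I will check that $\delta$ may be taken below $1/2$ without loss, since shrinking $\delta$ only enlarges $\mathcal G_N$.
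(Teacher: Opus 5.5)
Your route is the paper's: both halves of \cref{lem:inputs} are imported from \cite[Lemmas~1 and~8]{KasaiMAP2026}. The justification is the same as yours: \cref{def:ensemble}, without the fixed bit and offset, is exactly that socket ensemble, and $\mathcal G_N$ depends only on $G$. The paper's proof adds only the remark that the cited entropy inequality is stated with natural logarithms and must be divided by $\ln 2$. It also quotes the explicit constants $\delta=1/(20000\ell^3)$ and $325\ell^3/N$ from Lemma~8.

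The citation argument therefore stands, but your chain-rule sketch misdescribes what the cited inequality does. You claim the residual $H(s_3\mid s_1,s_2,s_4,s_5)$ is bounded by $h_2(a)h_2(b)$ on its own. That is false, even at the entropy-maximizing law. Take $P(s)\propto\epsilon^{|s|}$ on even-weight $s\in\{0,1\}^6$. Then $a=b=5\epsilon^2+O(\epsilon^4)$, so $h_2(a)h_2(b)=O(\epsilon^4\log^2(1/\epsilon))$. However, $s_1,s_2,s_4,s_5$ have odd parity with probability about $8\epsilon^2$, and on that event $s_3$ is a fair coin. The residual is therefore at least of order $\epsilon^2$. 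Numerically, at $\epsilon=0.1$ it is about $0.145$, against $h_2(a)h_2(b)\approx0.068$.

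\cref{eq:parity-bound} still holds in this example only because $H(s_1,s_2,s_4,s_5)$ falls short of $2h_2(a)+2h_2(b)$: the parity constraint makes those four bits dependent. You should cite \cite[Lemma~1]{KasaiMAP2026} for the full inequality, not for a residual bound. Any self-contained proof along your chain-rule lines would have to offset the residual against that deficit rather than bound the two terms separately.
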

\begin{proof}
The entropy inequality is~\cite[Lemma~1]{KasaiMAP2026}, with its natural-log normalization divided by $\ln2$. The graph statement is~\cite[Lemma~8]{KasaiMAP2026}. For example, that estimate permits $\delta=1/(20000\ell^3)$ and bounds its exceptional probability by $325\ell^3/N$ plus a polynomial times an exponentially decreasing term. These are channel-independent statements for precisely the independent socket ensemble in \cref{def:ensemble}. The square matrix $B$ need not be invertible.
\end{proof}

\subsection{Why one fixed bit is useful}
On $\mathcal G_N$, every nonzero $v\in\C^0(G)$ obeys
\begin{equation}\label{eq:two-sided-distance}
             \delta N<\wt(v)<(1-\delta)N.
\end{equation}
The lower bound follows from \cref{lem:inputs}. For the upper bound, $v+\ones_N\in\C(G)$ is nonzero, since $v_1=0$. Its weight also exceeds $\delta N$. Thus \cref{lem:list} covers every possible competing transmitted word. Hamming distance is used here to restrict candidate differences in the graph count; it is not an adversarial edit-distance guarantee.

One cannot discard the all-one difference by appealing to a Hamming distance bound. For example, a channel that reveals only the unordered pair $\{x,x+\ones\}$ has mutual information per input bit tending to one under iid uniform input, while any uniform linear code containing $\ones$ has block error at least $1/2$. Fixing one bit removes this ambiguity at a cost of one information bit, or $1/n$ bits per nucleotide ($1/N$ per transmitted bit). This example explains the restriction in a theorem intended to use arbitrary output-dependent lists.

\subsection{Strictness away from the two extreme weights}
For a list exponent $\eta<1-R$, put
\begin{equation}\label{eq:exponent}
 \Psi_\eta(a,b)=(R-3)h_2(a)-3h_2(b)+\phi(a,b)
                       +\min\{h_2(b),\eta\}.
\end{equation}

Here $a$ is the relative weight of a punctured candidate and $b$ is the relative weight of a transmitted difference. Each term in \cref{eq:exponent} contributes to the exponential part $2^{N\Psi_\eta(a,b)}$ of the expected-count bound. The term $Rh_2(a)$ comes from bounding the number of punctured candidates, $\phi(a,b)$ from bounding parity-compatible socket assignments, and $-3h_2(a)-3h_2(b)$ from the two matching denominators. The remaining term counts transmitted candidates, using whichever is smaller: the entire Hamming sphere of relative weight $b$, or the prescribed list. A negative total exponent means that the parity constraints outweigh all these choices.

\begin{lemma}[Uniform negative exponent]\label{lem:gap}
Fix $0\le\eta<1-R$ and $0<\delta<1/2$. There is $\kappa>0$ such that
\begin{equation}\label{eq:negative}
 \sup_{\substack{0\le a\le1\\\delta\le b\le1-\delta}}
                      \Psi_\eta(a,b)\le-\kappa.
\end{equation}
\end{lemma}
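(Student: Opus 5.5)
The plan is to prove strict negativity pointwise on the compact set $[0,1]\times[\delta,1-\delta]$, and then to upgrade it to the uniform bound \cref{eq:negative} by semicontinuity. By \cref{eq:phi}, $\phi$ is an infimum of continuous functions of $(a,b)$, so it is upper semicontinuous. The remaining terms of $\Psi_\eta$ are continuous. Hence $\Psi_\eta$ is upper semicontinuous on the compact domain, its supremum is attained, and a pointwise bound $\Psi_\eta<0$ yields some $\kappa>0$. One caveat: if $\phi$ takes the value $-\infty$ at infeasible means, those points are harmless.

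\emph{Reduction via the parity bound.} Write $A=h_2(a)$ and $B=h_2(b)$, so that $B\ge h_2(\delta)>0$. Inserting \cref{eq:parity-bound} from \cref{lem:inputs} into \cref{eq:exponent} gives
\[
 \Psi_\eta\le (R-1)A-B+AB+\min\{B,\eta\}.
\]
I will split into cases.
\begin{itemize}
\item If $B\le\eta$, the bound equals $A(R-1+B)$. This is negative whenever $A>0$, because $B\le\eta<1-R$.
\item If $B>\eta$, the bound equals $(R-1+B)A-(B-\eta)$.
 \begin{itemize}
 \item When $B\ge1-R$, use $A\le1$ to get at most $R-1+\eta<0$.
 \item When $B<1-R$ and $A>0$, the first term is negative and the second is nonpositive, so the bound is negative.
 \item When $A=0$, the bound is $-(B-\eta)<0$.
 \end{itemize}
\end{itemize}
So the inherited inequality gives strict negativity everywhere except on the endpoint set $\{A=0,\ B\le\eta\}$, that is, $a\in\{0,1\}$ with $h_2(b)\le\eta$. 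There it only gives $\le0$. This is exactly the endpoint strictness issue flagged in the introduction, and I expect it to be the main obstacle.

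\emph{Endpoints $a\in\{0,1\}$.} Here I would compute $\phi$ exactly instead of using \cref{eq:parity-bound}.
\begin{itemize}
\item At $a=0$, all punctured sockets are inactive. The transmitted triple must then have even weight: weight $0$ with probability $1-p$, or weight $2$ with probability $p$. After symmetrization, the maximal entropy is $h_2(p)+p\log_2 3$ with $2p=3b$.
\item At $a=1$, the punctured triple has odd weight, so the transmitted triple has odd weight. The same computation applies with the weights $\{1,3\}$.
\end{itemize}
In both cases $\Psi_\eta=\phi-3h_2(b)+h_2(b)$ on the relevant set, since $\min\{h_2(b),\eta\}=h_2(b)$ there. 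So it suffices to show $\phi(0,b)<2h_2(b)$, and similarly $\phi(1,b)<2h_2(b)$, whenever $h_2(b)\le\eta<1$.

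The support of a weight-restricted triple has only four elements, so its entropy is at most $2$ bits. Equality forces the uniform law on those four words, which has mean $b=1/2$. That point is excluded because $h_2(1/2)=1>\eta$. For any other $b$, I will argue strictness as follows. The three coordinates are identically distributed, and for $b\neq1/2$ they are not independent uniform bits on the parity constraint. I will therefore compare the entropy directly with $2h_2(b)$, either by explicit calculus in $p$ or by noting that equality in the subadditivity-type bound would force a product structure incompatible with the parity constraint. Infeasible $b$ give $-\infty$. Checking the small-$b$ regime shows this is not merely a limiting statement: the entropy behaves like $1.5\,b\log_2(1/b)$, strictly below $2h_2(b)\approx 2b\log_2(1/b)$.

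Combining the endpoint computation with the case analysis gives $\Psi_\eta<0$ at every point of the domain. Upper semicontinuity and compactness then give the uniform bound $-\kappa$.
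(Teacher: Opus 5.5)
Your outline coincides with the paper's proof: the same envelope from the parity bound, the same case split leaving only $a\in\{0,1\}$ with $h_2(b)\le\eta$, a refined maximum-entropy argument at those endpoints, and upper semicontinuity on the compact rectangle to turn pointwise negativity into a uniform $\kappa$. The gap is that the one step the lemma hinges on is left as a plan. That step is $\phi(0,b)<2h_2(b)$ and $\phi(1,b)<2h_2(b)$ for admissible $b\neq1/2$. Your ``at most $2$ bits'' observation settles only $b=1/2$, which is excluded anyway. The stated reason for other $b$, that the bits are ``not independent uniform bits,'' is not the relevant obstruction. Equality in subadditivity requires only independence, not uniformity, so independent non-uniform marginals must also be ruled out.

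The missing computation is short. At $a=0$, take a symmetrized maximizer; the third transmitted bit is the XOR of the other two. Hence its entropy equals $H(s_4,s_5)\le H(s_4)+H(s_5)=2h_2(b)$, with equality only if $s_4,s_5$ are independent Bernoulli$(b)$. In that case $s_6=s_4\oplus s_5$ has mean $2b(1-b)$. This must equal $b$, which forces $b\in\{0,1/2\}$. At $a=1$ the triple has odd parity, so $s_6=1\oplus s_4\oplus s_5$ has mean $1-2b(1-b)$, and equating it to $b$ forces $b\in\{1/2,1\}$. All these values are excluded by $\delta\le b\le1-\delta$ and $h_2(b)\le\eta<1$.

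If you instead pursue your explicit formula, note that $2h_2(b)-h_2(3b/2)-\tfrac{3b}{2}\log_2 3$ vanishes together with its first derivative at $b=1/2$. A first-order sign check therefore cannot give strictness near $b=1/2$, whereas the subadditivity route avoids this issue.
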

\begin{proof}
Write $A_0=h_2(a)$, $B_0=h_2(b)$, and $c_0=1-R$. The parity bound gives
\begin{equation}\label{eq:envelope}
 \Psi_\eta(a,b)\le (B_0-c_0)A_0-B_0+\min\{B_0,\eta\}.
\end{equation}
If $B_0>\eta$, the right side is strictly negative: for $B_0\le c_0$ it is at most $\eta-B_0$, and for $B_0\ge c_0$ it is at most $\eta-c_0$, using $A_0\le1$. If $B_0\le\eta$, it is strictly negative whenever $0<a<1$, because $B_0-c_0<0$.

The coarse bound can equal zero when $A_0=0$. An upper bound of zero would not suffice: the final count still contains a polynomial factor in $N$. We therefore return to the parity distribution at these endpoint means and show that its entropy is strictly smaller than the coarse bound allows.

Only $a=0$ or $a=1$ with $B_0\le\eta<1$ remains. Use the maximum-entropy interpretation of $\phi$ in \cref{eq:phi}. For feasible means, the finite-dimensional set of admissible laws is compact, so entropy attains a maximum. Averaging a maximizing law over permutations within each triple preserves the constraints and cannot decrease entropy; we may therefore take every coordinate of the last triple to have mean $b$.

At $a=0$ or $a=1$, the first triple is deterministic, and the last triple has even or odd parity, respectively. Any two bits determine the third, so $\phi(a,b)\le2h_2(b)$. At $a=0$, equality would require the first two bits of the last triple to be independent Bernoulli$(b)$ variables. Their parity would then have mean $2b(1-b)=b$, forcing $b=0$ or $b=1/2$. At $a=1$, odd parity similarly forces $1-2b(1-b)=b$, hence $b=1/2$ or $b=1$. The weight interval excludes 0 and 1, and $B_0<1$ excludes $1/2$. Thus $\phi(a,b)<2h_2(b)$ at the remaining feasible endpoints, which makes $\Psi_\eta(a,b)<0$ there as well. Infeasible pairs already have exponent $-\infty$.

Pointwise negativity must be strengthened to one bound valid for every weight pair, because the number of pairs grows with $N$. In \cref{eq:phi}, fixing $x,y$ gives a continuous affine function of $(a,b)$; their infimum is upper semicontinuous. Adding the continuous entropy terms preserves this property. On the compact rectangle in \cref{eq:negative}, $\Psi_\eta$ therefore attains a maximum. The endpoint argument ensures that this maximum cannot be zero, even if maximizing weight pairs approach a boundary. Its negative gives a single constant $\kappa>0$ valid throughout the rectangle.
\end{proof}

\begin{lemma}[Arbitrary candidate lists]\label{lem:list}
Fix $0\le\eta<1-R$, $0<\delta<1/2$, a deterministic set $\T\subseteq\F^N$ of cardinality at most $2^{N\eta}$, and $x\in\F^N$. Define the nonnegative integer
\begin{equation}\label{eq:competitor-count}
 M_G(x,\T):=
 \sum_{\substack{x'\in\T\\\delta N\le\wt(x+x')\le(1-\delta)N}}
       \sum_{u\in\F^K}\mathbf1\{Au+B(x+x')=0\}.
\end{equation}
It counts pairs consisting of a candidate transmitted word and a punctured difference satisfying the graph constraints. Here $\mathbf1\{E\}$ is one if $E$ holds and zero otherwise. Then
\begin{equation}\label{eq:list-bound}
 \E_G M_G(x,\T)\le Q_N2^{-N\kappa},\qquad
 Q_N=(K+1)(N+1)(3N+1)^2,
\end{equation}
with $\kappa$ from \cref{lem:gap}. The same bound holds conditionally on a random pair $(x,\T)$ jointly independent of $G$; independence of the two components from each other is not required.
\end{lemma}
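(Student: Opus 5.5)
Since $M_G(x,\T)$ is a finite sum of indicators, I will compute its expectation term by term using the exact socket-counting formula for the independent socket ensemble~\cite[Sec.~4.2]{KasaiMAP2026}. Fix a transmitted difference $w=x+x'$ of weight $j$ and a punctured vector $u$ of weight $i$. The event $Au+Bw=0$ holds exactly when every check receives an even total number of edges from the $\ell i$ sockets of $u$ and the $3j$ sockets of $w$; reduction modulo two does not change parities. The two matchings are uniform and independent. Hence the probability is the number of check-socket assignments with $\ell i$ active punctured sockets, $3j$ active transmitted sockets and even parity at every check, divided by $\binom{3N}{\ell i}\binom{3N}{3j}$. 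The numerator is $\mathrm{coef}\bigl(f(x,y)^N,x^{\ell i}y^{3j}\bigr)$. Since all coefficients of $f$ are nonnegative, $\mathrm{coef}\le f(x,y)^N x^{-\ell i}y^{-3j}$ for all $x,y>0$. With $a=\ell i/(3N)=i/K$ and $b=j/N$, this gives the bound $2^{N\phi(a,b)}$, and the equality $\ell K=3N$ makes the exponent $3a$ consistent.

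Next I sum. There are $\binom{K}{i}\le 2^{Kh_2(a)}=2^{NRh_2(a)}$ choices of $u$. For fixed $j$, the number of $x'\in\T$ with $\wt(x+x')=j$ is at most $\min\{\binom{N}{j},|\T|\}\le 2^{N\min\{h_2(b),\eta\}}$. The standard lower bound $\binom{M}{k}\ge 2^{Mh_2(k/M)}/(M+1)$, applied to the two denominators, costs a factor $(3N+1)^2$ and contributes $2^{-3Nh_2(a)-3Nh_2(b)}$. Each $(i,j)$ term is therefore at most $(3N+1)^2 2^{N\Psi_\eta(a,b)}$ with $b\in[\delta,1-\delta]$. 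Applying \cref{lem:gap} and summing over at most $(K+1)(N+1)$ pairs $(i,j)$ gives $Q_N2^{-N\kappa}$. Infeasible pairs contribute zero.

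For the conditional statement, I condition on $(x,\T)$. Joint independence from $G$ leaves the law of $G$ unchanged, so the deterministic bound applies pointwise. Averaging then gives the claim, and no independence between $x$ and $\T$ is used.

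The main obstacle is the endpoint cases $i\in\{0,K\}$, where $a\in\{0,1\}$ and the infimum defining $\phi$ may not be attained. Cases with $j$ in the interval are fine because $\delta\le b\le1-\delta$. I will handle the endpoint cases through the infimum directly: the coefficient bound holds for every $x,y>0$, so it also holds at the infimum. \cref{lem:gap} already bounds $\Psi_\eta$, including its limiting boundary values, uniformly by $-\kappa$. Equivalently, at $a=0$ only $y$ matters, and $\mathrm{coef}(f(0,y)^N,y^{3j})$ is bounded by the same infimum. Nothing beyond the uniform bound of \cref{lem:gap} is needed.
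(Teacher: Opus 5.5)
Your proposal is correct and follows the paper's proof essentially step for step. The shared steps are the exact socket-count probability (the coefficient of $x^{\ell i}y^{3j}$ in $f^N$ divided by the two binomial denominators), the coefficient bound obtained by taking the infimum over $x,y>0$, the type-class estimates together with the $\min\{\binom Nj,2^{N\eta}\}$ list count, summation over at most $(K+1)(N+1)$ weight pairs using \cref{lem:gap}, and conditioning on the jointly independent pair $(x,\T)$. Your remark that the coefficient bound passes to the infimum even when it is not attained at $a\in\{0,1\}$ is what the paper's phrase ``taking the infimum'' relies on.
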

\begin{proof}
For fixed $u,z$ of weights $k,w$, the probability of $Au+Bz=0$ is exactly
\begin{equation}\label{eq:socket-count}
 \frac{[x^{\ell k}y^{3w}]f(x,y)^N}
             {\binom{3N}{\ell k}\binom{3N}{3w}}.
\end{equation}
The independent matchings make the active row sockets uniform subsets of the prescribed sizes. The numerator counts those satisfying the even-parity condition at every check, including multiplicities that cancel modulo two. This is the two-type configuration count used in~\cite[Sec.~4.2, Eq.~(11), and App.~B, Eq.~(22)]{KasaiMAP2026}.

The coefficient bound used below is deterministic. Since all coefficients of $f^N$ are nonnegative, its selected coefficient times $x^{\ell k}y^{3w}$ is at most $f(x,y)^N$ for every positive $x,y$. Dividing by that monomial and taking the infimum gives the exponent $N\phi(k/K,w/N)$, since $\ell k=3N(k/K)$. Thus the generating function supplies an upper bound for each weight pair before we sum over candidates.

There are $\binom Kk$ choices of $u$, and at most $\min\{\binom Nw,2^{N\eta}\}$ words of the list at distance $w$ from $x$. Using the type-class estimates~\cite[Example~11.1.3, Eq.~(11.40)]{CoverThomas2006}
\begin{equation}\label{eq:binomial}
 (m+1)^{-1}2^{m h_2(j/m)}\le\binom mj\le2^{m h_2(j/m)}
\end{equation}
and the coefficient bound from \cref{eq:phi}, the term at $(k,w)$ is at most $(3N+1)^2 2^{N\Psi_\eta(k/K,w/N)}$. There are at most $(K+1)(N+1)$ terms. Apply \cref{eq:negative}.
\end{proof}

\subsection{From candidate lists to block-MAP reliability}
The graph estimate can be separated from the construction of lists for a particular channel. The channel lists already provide the required size and miss bounds; the following proposition explains how these properties imply MAP reliability.

\begin{proposition}[List-to-MAP bound]\label{prop:list-map}
Use the ensemble and uniform known offset of \cref{def:ensemble}. Let $W_N(y\mid x)$ be any channel from $\F^N$ to a countable output alphabet, with conditional law depending on the graph and offset only through its input $x$. Fix $0\le\eta<1-R$. Suppose deterministic lists $\mathcal L_N(y)\subseteq\F^N$ satisfy $|\mathcal L_N(y)|\le2^{N\eta}$ for every $y$ and have miss probability $\epsilon_N$ under uniform input. With $\delta$ from \cref{lem:inputs}, $\kappa$ from \cref{lem:gap}, and $Q_N$ from \cref{eq:list-bound}, the ensemble-average full extended-word error satisfies
\begin{equation}\label{eq:list-map}
 \overline P_{\rm block}^{\rm MAP}(W_N)
 \le \Prob(\mathcal G_N^c)+\epsilon_N+Q_N2^{-N\kappa}.
\end{equation}
\end{proposition}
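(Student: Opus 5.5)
We bound the MAP error by the error of a suboptimal comparison decoder. For fixed $(g,c)$ and output $y$, this decoder forms $\mathcal L_N(y)\cap(c+\C^0(g))$. If the set contains exactly one word $x'$, it outputs the unique $(u,v)\in\C_{\rm ext}^0(g)$ with $v=x'+c$; otherwise it declares failure. Uniqueness of $u$ holds on $\mathcal G_N$ because $\ker A=\{0\}$. The prior on $\C_{\rm ext}^0(g)$ is uniform, so the MAP decoder \cref{eq:map-decoder} minimizes the full-word block error among all decoders for each fixed $(g,c)$. Hence $p_n(g,c)$ is at most the comparison decoder's error. It then suffices to bound the latter's ensemble average. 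We split this average into three events: $G\notin\mathcal G_N$, $X\notin\mathcal L_N(Y)$, and, with $G\in\mathcal G_N$ and $X\in\mathcal L_N(Y)$, the existence of another $x'\in\mathcal L_N(Y)$ with $x'+X\in\C^0(G)\setminus\{0\}$. Off these three events the intersection is $\{X\}$ and decoding is correct, since $(U,V)$ is recovered from $V=X+C$ via injectivity of $A$.

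The first event contributes $\Prob(\mathcal G_N^c)$ directly. For the second, we use the observation after \cref{def:ensemble}: averaging over $C$ gives $\Prob(X=x\mid G=g)=2^{-N}$ for every $g$. Thus $X$ is uniform and independent of $G$. The channel law depends on $(G,C)$ only through $X$, so the pair $(X,Y)$ has the uniform-input law of the channel and is jointly independent of $G$. The miss probability is therefore exactly $\epsilon_N$.

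For the third event, note that any competing $x'$ gives a nonzero difference $x+x'\in\C^0(G)$. On $\mathcal G_N$ this difference satisfies the two-sided bound \cref{eq:two-sided-distance}, so $\delta N<\wt(x+x')<(1-\delta)N$. Some $u$ then solves $Au+B(x+x')=0$. Consequently the indicator of the third event is at most $M_G(X,\mathcal L_N(Y))$ from \cref{eq:competitor-count}, with $\T=\mathcal L_N(Y)$ and $x=X$. By Markov's inequality, the probability of the third event is at most $\E\,M_G(X,\mathcal L_N(Y))$. The random pair $(X,\mathcal L_N(Y))$ is a function of $(X,Y)$, which is jointly independent of $G$. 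The list is deterministic given $Y$ and has size at most $2^{N\eta}$. \Cref{lem:list}, applied conditionally on the pair and then averaged, bounds this expectation by $Q_N2^{-N\kappa}$. A union bound over the three events gives \cref{eq:list-map}.

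The main obstacle is justifying that $(X,Y)$ is independent of $G$ even though $X$ lies in a graph-dependent coset. The proof of this independence requires care in three places. First, the conditioning must run over $G$ while $C$ and $V$ are averaged. Second, the channel randomness must be independent of $(G,C,V)$ given $X$. Third, the per-code error of \cref{def:error}, which fixes $C$, must be averaged only afterwards, so that the comparison-decoder bound holds pointwise before the expectation over $(G,C)$ is taken.
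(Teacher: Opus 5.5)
Your proof is correct and follows the paper's argument. You use the same comparison decoder (you intersect the list with the coset $c+\C^0(g)$ and lift through the injectivity of $A$, whereas the paper works directly with extended words), the same offset-averaging argument for the joint independence of $(X,Y)$ from $G$, and the same bound on the competitor event by $\E M_G(X,\mathcal L_N(Y))$ via \cref{lem:list}. The only loose end is that on graphs outside $\mathcal G_N$ your lift needs a fixed tie-breaking rule; this is harmless, since that event is charged in full to $\Prob(\mathcal G_N^c)$.
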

\begin{proof}
We exhibit a decoder whose failures can be counted. For fixed $g,c,y$, form the set of extended words consistent with both the code and the list,
\begin{equation}\label{eq:comparison-decoder}
 \mathcal D_{g,c}(y):=
 \{(u,v)\in\C_{\rm ext}^{0}(g):v+c\in\mathcal L_N(y)\}.
\end{equation}
The comparison decoder returns the unique element of this set if its cardinality is one, and declares a block error otherwise. If the true input lies in the list, the true extended word belongs to this set. We need to bound the chance that a second member remains.

First, the uniform offset makes $X$ uniform for every graph. Because the channel depends only on $X$, we have $\Prob(X=x,Y=y\mid G=g)=2^{-N}W_N(y\mid x)$. Hence $(X,Y)$, and therefore $(X,\mathcal L_N(Y))$, is jointly independent of $G$. In the analysis we may condition on $(X,Y)$ and still use the original graph distribution in \cref{lem:list}. We do not also condition on $C$ or on the good-graph event.

Now let $(U,V)$ be the true extended word and $(u',v')$ a competing member of \cref{eq:comparison-decoder}. Write $x'=v'+C$. On a good graph, injectivity of $A$ rules out a second punctured partner with $v'=V$, so $x'\ne X$. The common offset cancels, and the parity equations give
\begin{equation}\label{eq:competitor-difference}
 X+x'=V+v'\in\C^0(G)\setminus\{0\},\qquad
 A(U+u')+B(X+x')=0.
\end{equation}
The first relation puts the difference inside the weight interval in \cref{eq:two-sided-distance}. The second identifies the punctured vector counted in \cref{lem:list}: it is the difference $U+u'$, not the competing component $u'$ itself.

For fixed true $(U,V),C$, different competitors give different counted pairs $(x',U+u')$. Thus their number on a good graph is at most the graph count $M_G(X,\mathcal L_N(Y))$ in \cref{eq:competitor-count}. In particular,
\begin{equation}\label{eq:competitor-probability}
 \Prob\{\mathcal G_N\text{ and a competitor exists}\}
 \le\E M_G(X,\mathcal L_N(Y))\le Q_N2^{-N\kappa}.
\end{equation}
The first inequality uses that a nonnegative integer is at least one whenever a competitor exists. The second applies \cref{lem:list} after conditioning on $(X,Y)$; the count is averaged over all graphs, so no conditioned socket ensemble is introduced.

The comparison decoder succeeds whenever the graph is good, the true input is in the list, and no competitor exists. A union bound over the complementary events gives \cref{eq:list-map}. For every fixed graph and offset, matched block-MAP minimizes the full-word error under the same prior, so its error is no greater than this comparison decoder's error. This proves the stated MAP bound.
\end{proof}

\section{Proof of the main theorem}\label{sec:map-proof}
The channel list identifies a small candidate set of physical inputs from the observed reads. The MN estimate then controls whether that set contains another codeword. These two statements give a decoder with small block error; MAP optimality and a final selection argument give the conclusions of \cref{thm:main}.

\begin{proof}[Proof of \cref{thm:main}]
Assume $J_t>2R$ and choose
$1-J_t/2<\eta<1-R$.
The left inequality permits the construction in \cref{lem:channel-list}, while the right inequality permits the graph count in \cref{lem:list}. Thus the same exponent satisfies both requirements. For each observed tuple $y^{[t]}$, construct the nucleotide candidate set $\T_n(y^{[t]})$ and use its binary image
$\mathcal L_N(y^{[t]})=\beta_n(\T_n(y^{[t]}))$, where $N=2n$.
The bijection $\beta_n$ preserves cardinality. Hence this binary list has at most $2^{N\eta}$ elements, and its miss probability under iid-uniform input is $O(1/n)$.

The IDS channel has a countable output alphabet and depends on the graph and offset only through the stored input. We can therefore apply the list-to-MAP bound in \cref{prop:list-map} to these binary lists. Its three terms account for an exceptional graph, a missed true input, and a surviving competitor:
\begin{equation}\label{eq:main-error}
 \begin{aligned}
 \overline P_{\rm block}^{\rm MAP}
 &=\E_{G,C}p_n(G,C)\\
 &\le \Prob(\mathcal G_N^c)
       +\Prob\{X^n\notin\T_n(Y^{[t]})\}
       +Q_N2^{-N\kappa}
 =O(1/n).
 \end{aligned}
\end{equation}
The first term is $O(1/n)$ by the graph estimate in \cref{lem:inputs}, and the second has that order by \cref{lem:channel-list}. The third decays exponentially up to the polynomial factor $Q_N$. All constants are fixed before $n$ grows.

It remains to select deterministic codes with both the required rate and the error bound. Restrict this final selection to good graphs whose actual rate is close to $2R$:
\begin{equation}\label{eq:rate-selection-event}
 \mathcal H_n:=\mathcal G_N\cap\{|r_n(G)-2R|\le n^{-1/2}\}.
\end{equation}
The exceptional-graph estimate in \cref{lem:inputs} and the actual-rate bound in \cref{prop:rate}, followed by Markov's inequality, give $\Prob(\mathcal H_n^c)\le\Prob(\mathcal G_N^c)+\sqrt n\,\E|r_n(G)-2R|=O(\log n/\sqrt n)\to0$. Thus $\Prob(\mathcal H_n)\ge1/2$ for every sufficiently large admissible length. Since the error $p_n(G,C)$ is nonnegative,
\begin{equation}\label{eq:selection-average-error}
 \E_{G,C}[p_n(G,C)\mid\mathcal H_n]
 \le \frac{\E_{G,C}p_n(G,C)}{\Prob(\mathcal H_n)}
 \le 2\overline P_{\rm block}^{\rm MAP}
 =O(1/n).
\end{equation}
At least one graph--offset pair $(g_n,c_n)$ in this event has error no larger than the conditional average. Selecting such a pair at each sufficiently large admissible length gives $p_n(g_n,c_n)=O(1/n)$ and $|r_n(g_n)-2R|\le n^{-1/2}\to0$ simultaneously. Since $g_n\in\mathcal G_N$, its matrix $A$ is injective and each transmitted word has a unique punctured partner. Conditioning on $\mathcal H_n$ is used only for this final selection; the graph count establishing \cref{eq:main-error} uses the original unconditioned ensemble. The selected graph and offset are then fixed and may depend on the channel and the number of reads. Their error guarantee averages over uniform extended codewords and channel outputs, as in \cref{def:error}, rather than taking the maximum over messages. Together with \cref{prop:rate}, this proves all claims of \cref{thm:main}.
\end{proof}

The candidate-list argument controls block error directly. A normalized conditional-entropy bound alone would not establish reliability: uncertainty between the all-zero and all-one words has entropy one bit, yet every bit has error probability $1/2$ under the uniform prior on those two words.

\section{Conclusion}
For each fixed $\ell\ge4$, the $(\ell,3,3)$ MN socket ensemble with one transmitted bit fixed and a known uniform offset has vanishing block-MAP error on the quaternary queue IDS channel whenever its limiting rate $6/\ell$ is strictly below the joint iid-uniform SIR. The proof controls competing codewords directly, with hidden output partitions contributing a vanishing rate overhead. This extends the memoryless-channel MAP analysis of MN codes to unaligned reads with unbounded insertion runs and drift.

The MAP guarantee is asymptotic, with the degrees and channel parameters held fixed as the blocklength grows. The DNA model describes an identified strand with a fixed number of independent reads. Exact balance of the combined guanine and cytosine content, homopolymer constraints, random sampling depth, strand dropout, and context-dependent sequencing require either a replacement small-list theorem with the appropriate input law or an additional outer storage architecture. Sharper finite-length bounds and the case where the rate equals the SIR also remain to be determined.

\Cref{prop:list-map} separates the ensemble analysis from the IDS specialization. Applying it to another channel requires sufficiently small uniform-input lists with vanishing miss probability, constructed from the observations available to the receiver.

\appendix
\crefalias{section}{appendix}
\section{Convergence of the actual rate}\label{app:rate}
Counting variables minus checks gives the design dimension $K$, but need not give the transmitted dimension of an individual graph. Dependent checks can increase the dimension of $\C_{\rm ext}(G)$, whereas a nontrivial $\ker A$ identifies several extended words under projection. The next proof bounds both corrections in expectation. This is sufficient for convergence of the actual rate, even without requiring exact ranks for every graph.

\begin{proposition}[Actual-rate convergence]\label{prop:rate}
With $Q_N$ from \cref{eq:list-bound}, let $S_N=(K+1)(3N+1)$. The actual rate satisfies
\begin{equation}\label{eq:rate-l1}
 \E|r_n(G)-2R|\le
 \frac{\log_2 Q_N+\log_2 S_N+1}{n}
 =O(\log n/n).
\end{equation}
\end{proposition}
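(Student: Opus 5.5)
The plan is to sandwich $k_0(G)=\dim\C(G)-1$ between two quantities whose exponential moments are controlled by first-moment socket counts. Write $E(G)=\dim\C_{\rm ext}(G)$ and $\kappa_A(G)=\dim\ker A$.

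\textbf{Step 1: deterministic sandwich.} The projection $(u,v)\mapsto v$ maps $\C_{\rm ext}(G)$ onto $\C(G)$ with kernel $\{(u,0):u\in\ker A\}$, so $\dim\C(G)=E(G)-\kappa_A(G)$. Since $\C_{\rm ext}(G)$ is the kernel of an $N\times(K+N)$ matrix, $E(G)\ge K$. This gives
\[
K-\kappa_A(G)-1\le k_0(G)\le E(G)-1 .
\]
Because both excesses $E(G)-K$ and $\kappa_A(G)$ are nonnegative,
\[
|k_0(G)-K|\le (E(G)-K)+\kappa_A(G)+1 .
\]
Dividing by $n$ and using $K/n=2R$, it suffices to show
\[
\E[E(G)-K]\le\log_2Q_N,\qquad \E\,\kappa_A(G)\le\log_2 S_N .
\]

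\textbf{Step 2: first moments via Jensen.} By Jensen's inequality for the concave logarithm, $\E E(G)\le\log_2\E|\C_{\rm ext}(G)|$ and $\E\kappa_A(G)\le\log_2\E|\ker A|$. So the task reduces to proving
\[
\E|\C_{\rm ext}(G)|\le Q_N2^{K},\qquad \E|\ker A|\le S_N .
\]
For the first bound, write $\E|\C_{\rm ext}|=\sum_{(u,v)}\Prob\{Au+Bv=0\}$. Group the terms by weights $(k,w)$ and use the exact socket formula \cref{eq:socket-count}, the coefficient bound through $\phi$, and the binomial estimates \cref{eq:binomial}, exactly as in the proof of \cref{lem:list}, but with the list term replaced by the full sphere $\binom Nw\le2^{Nh_2(b)}$. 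Each term is then at most
\[
(3N+1)^2\,2^{N[(R-3)h_2(a)-2h_2(b)+\phi(a,b)]},
\]
and there are $(K+1)(N+1)$ weight pairs. By the parity bound \cref{eq:parity-bound} the exponent is at most
\[
(R-1)h_2(a)+h_2(a)h_2(b)\le Rh_2(a)\le R,
\]
giving $\E|\C_{\rm ext}|\le Q_N2^{NR}=Q_N2^K$.

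For the kernel of $A$, take $v=0$ (so $w=0$). The probability that a fixed $u$ of weight $k$ lies in $\ker A$ is
\[
\frac{[x^{\ell k}]\,\bigl(f(x,0)\bigr)^N}{\binom{3N}{\ell k}}.
\]
Summing over $\binom Kk$ choices of $u$ and $k+1\le K+1$ weights gives terms at most $(3N+1)\,2^{N[(R-3)h_2(a)+\phi(a,0)]}$. At $b=0$, \cref{eq:parity-bound} gives $\phi(a,0)\le2h_2(a)$, so the exponent is at most $(R-1)h_2(a)\le0$. Hence $\E|\ker A|\le(K+1)(3N+1)=S_N$.

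\textbf{Step 3: assemble.} Combining Steps 1 and 2 yields
\[
\E|r_n(G)-2R|\le\frac{\log_2Q_N+\log_2S_N+1}{n}=O(\log n/n).
\]
Convergence in probability then follows by Markov's inequality. The identity $\dim\C^0(G)=\dim\C(G)-1$ was already justified in \cref{eq:rate}, because $\ones_N\in\C(G)$ has first coordinate one.

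\textbf{Main obstacle.} The delicate point is confirming that the exact socket formula and the $\phi$ coefficient bound apply at the boundary weights $w=0$ and $k\in\{0,K\}$. The infimum in \cref{eq:phi} is still a valid upper bound there, since it holds for every $x,y>0$, but the degenerate single-type count at $y=0$ must be justified as a limit of that bound. It must also be checked that the crude envelope from \cref{eq:parity-bound} never exceeds $R$ (respectively $0$), so that no strictness argument like \cref{lem:gap} is needed.
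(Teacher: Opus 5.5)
Your proposal is correct and follows the paper's proof essentially step for step. It uses the same deterministic bound $|k_0-K|\le(d_{\rm ext}-K)+\dim\ker A+1$, the same Jensen step, and the same first-moment socket counts, with exponents $(R-3)h_2(a)-2h_2(b)+\phi(a,b)\le R$ for $\E|\C_{\rm ext}(G)|$ and $(R-1)h_2(a)\le 0$ for $\E|\ker A|$. The boundary issue you flag is not actually an obstacle: the bound $[x^{\ell k}y^{3w}]f(x,y)^N\le f(x,y)^N x^{-\ell k}y^{-3w}$ holds for every $x,y>0$, including $w=0$, so no limiting argument at $y=0$ is needed.
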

\begin{proof}
Sum \cref{eq:socket-count} over all pairs $(u,v)$ and use \cref{eq:binomial,eq:parity-bound}. For $a=k/K$, $b=w/N$, the exponential part of the expected weight count is at most
\begin{equation}\label{eq:rate-spectrum}
 (R-3)h_2(a)-2h_2(b)+\phi(a,b)
 \le h_2(a)\bigl(h_2(b)-(1-R)\bigr)\le R.
\end{equation}
Thus $\E|\C_{\rm ext}(G)|\le Q_N2^K$. Restricting the count to $v=0$ leaves only one binomial denominator and gives $\E|\ker A|\le S_N$, since its exponent is at most $(R-1)h_2(a)\le0$.

Write $d_{\rm ext}=\dim\C_{\rm ext}(G)$ and $q=\dim\ker A$. There are $N$ checks on $K+N$ variables, so $d_{\rm ext}\ge K$, while the projection has dimension $d_{\rm ext}-q$. Jensen's inequality gives $\E(d_{\rm ext}-K)\le\log_2 Q_N$ and $\E q\le\log_2 S_N$. Since $k_0=d_{\rm ext}-q-1$, we have $|k_0-K|\le(d_{\rm ext}-K)+q+1$. Dividing by $n$ proves \cref{eq:rate-l1}, and Markov's inequality gives convergence in probability. This proof does not require a full-rank event.
\end{proof}

\section{An application using an analytic lower bound}\label{app:noise}
\Cref{thm:main} uses the true SIR $J_t$. The following lower bound provides one way to verify its condition for a particular channel. Here $d=p_D/(1-p_I)$ is the eventual deletion probability defined in \cref{eq:queue}. With no insertions or deletions and a single read, the quaternary symmetric channel has the exact SIR $J_1=2-h_2(p_S)-p_S\log_2 3$.

For a single read, revealing insertion counts and deletion indicators aligns all surviving transmissions. Those indicators are independent of the input. The conditional mutual information is $n(1-d)[2-h_2(p_S)-p_S\log_2 3]$, and the revelation costs at most $n[h_2(p_I)/(1-p_I)+h_2(d)]$. 

Writing the revealed data as $S$, independence of $S$ from $X^n$ gives $I(X^n;Y,S)=I(X^n;Y\mid S)$. Hiding $S$ loses $I(X^n;S\mid Y)\le H(S)$ bits. Subtracting this entropy from the aligned-channel information therefore gives a lower bound on the information in the original read, which explains the direction of the following inequality:

\begin{equation}\label{eq:explicit-lower}
 J_1\ge J_{\rm low}:=
 \max\left\{0,(1-d)[2-h_2(p_S)-p_S\log_2 3]
                -\frac{h_2(p_I)}{1-p_I}-h_2(d)\right\}.
\end{equation}
Additional reads cannot decrease mutual information, so $J_t\ge J_1\ge J_{\rm low}$. Thus $J_{\rm low}>6/\ell$ verifies the sufficient condition of \cref{thm:main} for every fixed number of reads, without evaluating $J_t$ exactly. This bound uses an alignment genie only to lower-bound the information rate, by subtracting its entropy. It provides no alignment information to the actual decoder.

For example, $p_I=p_D=p_S=0.01$ gives $d=1/99$. Substituting $h_2(0.01)<0.081$, $h_2(1/99)<0.082$, and $\log_2 3<1.585$ into \cref{eq:explicit-lower} gives $J_{\rm low}>1.72$. The $(6,3,3)$ family, whose limiting rate is $6/6=1$ bit per nucleotide, therefore satisfies \cref{thm:main} for every fixed $t\ge1$. This verification requires no numerical SIR estimate.

\end{document}